\pdfoutput=1

\documentclass[11pt]{article}

\usepackage{amsmath}
\usepackage{amsfonts}
\usepackage{amssymb}
\usepackage{latexsym}
\usepackage{graphicx}
\usepackage{psfrag}
\usepackage{stmaryrd}
\usepackage[all,cmtip]{xy}

\newcommand{\opn}[1]{\operatorname{#1}}

\newcommand{\real}{\opn{Re}}
\newcommand{\imag}{\opn{Im}}

\newcommand{\der}[2]{\frac{\partial #1}{\partial #2}}

\newcommand{\pr}{\partial}

\newcommand{\mc}[1]{\mathcal{#1}}
\newcommand{\jt}{\textstyle}
\newcommand{\jd}{\displaystyle}
\newcommand{\js}{\scriptstyle}

\newcommand{\wtil}[1]{\widetilde{#1}}

\newcommand{\IGNORE}[1]{}

\usepackage{epsfig}
\newcommand{\mypsdraft}{\psdraft}
\renewcommand{\mypsdraft}{\psfull}
\newcommand{\mypsfull}{\psfull}

\usepackage{theorem}
\theoremstyle{plain}
\theoremheaderfont{\scshape}
\newtheorem{theorem}{Theorem}

\oddsidemargin 0.25in
\textwidth 6in
\topmargin 0.0in
\textheight 8.5in

\title{An Infinite Branching Hierarchy of Time-Periodic Solutions of
the Benjamin-Ono Equation}

\date{November 3, 2008}

\author{
Jon Wilkening
\thanks{Department of Mathematics and Lawrence Berkeley National
  Laboratory, University of California, Berkeley, CA 94720 ({\tt
    wilken@math.berkeley.edu}).  This work was supported in part by the
  Director, Office of Science, Computational and Technology Research,
  U.S. Department of Energy under Contract No. DE-AC02-05CH11231.}
}

\begin{document}

\maketitle

\begin{abstract}
  We present a new representation of solutions of the Benjamin-Ono
  equation that are periodic in space and time.  Up to an additive
  constant and a Galilean transformation, each of these solutions is a
  previously known, multi-periodic solution; however, the new
  representation unifies the subset of such solutions with a fixed
  spatial period and a continuously varying temporal period into a
  single network of smooth manifolds connected together by an infinite
  hierarchy of bifurcations.  Our representation explicitly describes
  the evolution of the Fourier modes of the solution as well as the
  particle trajectories in a meromorphic representation of these
  solutions; therefore, we have also solved the problem of finding
  periodic solutions of the ordinary differential equation governing
  these particles, including a description of a bifurcation mechanism
  for adding or removing particles without destroying periodicity.  We
  illustrate the types of bifurcation that occur with several
  examples, including degenerate bifurcations not predicted by
  linearization about traveling waves.
\end{abstract}

\vspace*{5pt}
{\bf Key words.} Periodic solutions, Benjamin-Ono equation,
non-linear waves, solitons, bifurcation, exact solution

\vspace*{5pt}
{\bf AMS subject classifications.} 35Q53, 35Q51, 37K50, 37G15

\thispagestyle{empty}

\section{Introduction}
\label{sec:intro}

The Benjamin-Ono equation is a model water wave equation for the
propagation of unidirectional, weakly nonlinear internal waves in a
deep, stratified fluid \cite{benjamin:67, davis:67, ono:75}.  It is a
non-linear, non-local dispersive equation that, after a suitable
choice of spatial and temporal scales, may be written
\begin{equation} \label{eqn:BO:intro}
  u_t = Hu_{xx} - uu_x, \qquad
  Hf(x) = \frac{1}{\pi}\,PV\!\!\int_{-\infty}^\infty
  \frac{f(\xi)}{x-\xi}\,d\xi.
\end{equation}
Our motivation for studying time-periodic solutions of this equation
was inspired by the analysis of Plotnikov, Toland and Iooss
\cite{tolandPlotnikov, tolandPlotnikovIooss} using the Nash-Moser
implicit function theorem to prove the existence of non-trivial time
periodic solutions of the two-dimensional water wave over an
irrotational, incompressible, inviscid fluid.  We hope to learn more
about these solutions through direct numerical simulation.  As a first
step, in collaboration with D. Ambrose, the author has developed a
numerical continuation method \cite{benj1, benj2}
for the computation of time-periodic
solutions of non-linear PDE and used it to compute
families of time-periodic solutions of the Benjamin-Ono equation,
which shares many of the features of the water wave such as
non-locality, but is much less expensive to compute.

Because we came to this problem from the perspective of developing
numerical tools that can also be used to study the full water wave
equation, we did not take advantage of the existence of solitons or
complete integrability in our numerical study of the Benjamin-Ono
equation.  The purpose of the current paper is to bridge this
connection, i.e.~to show how the form of the exact solutions we
deduced from numerical simulations is related to previously known,
multi-periodic solutions \cite{satsuma:ishimori:79, dobro:91,
  matsuno:04}.  Our representation is quite different, describing 
time-periodic solutions in terms of the trajectories of the Fourier
modes, which are expressed in terms of $N$ particles $\beta_j(t)$
moving through the unit disk of the complex plane.  Thus, one of the
main results of this paper is to show the relationship between the
meromorphic solutions described e.g.~in~\cite{case:BO:N:sol} and these
multi-periodic solutions.  Our representation also simplifies the
computation and visualization of multi-periodic solutions.  Rather
than solve a system of non-linear algebraic equations at each $x$ to
find $u(x)$ as was done in \cite{matsuno:04}, we represent $u(x)$
through its Fourier coefficients by finding the zeros $\beta_j$ of a
polynomial whose coefficients involve only a finite number of non-zero
temporal Fourier modes.  We find that plotting the trajectories of the
particles $\beta_j(t)$ often gives more information about the solution
than making movies of $u(x,t)$ directly.

A key difference in our setup of the problem is that we wish to fix
the spatial period once and for all (using e.g.~$2\pi$) and describe
all families of time-periodic solutions in which the temporal period
depends continuously on the parameters of the family.  This framework
may be perceived as awkward and overly restrictive by some readers,
and we agree that the most natural ``periodic''
generalization of the $N$-soliton solutions \cite{matsuno:BO:sol,
  matsuno:BO:04} of an integrable system such as Benjamin-Ono are the
$N$-phase quasi-periodic (or multi-periodic) solutions
\cite{hino:almost:periodic, satsuma:ishimori:79, dobro:91,
  matsuno:04}.  However, our goal in this paper is not to study the
behavior of these solutions in the long wave-length limit, but rather
to understand how all these families of solutions are connected
(continuously) together through a hierarchy of bifurcations.  The
additional restriction of exact periodicity makes the bifurcation
problem harder for integrable problems, but easier for other problems
that can only be studied numerically.  To our knowledge, bifurcation
between levels in the hierarchy of multi-periodic solutions of
Benjamin-Ono has not previously been discussed. Indeed, with the
exception of \cite{dobro:91}, previous representations of these
solutions are missing a key degree of freedom, the mean, which must
vary in order for these solutions to connect with each other.  The
most interesting result of this paper is that counting dimensions of
nullspaces of the linearized problem does not predict certain
degenerate bifurcations that allow for immediate jumps across several
levels of the infinite hierarchy of time-periodic solutions.
As a consequence, in our numerical studies, we
found bifurcations from traveling waves to the second level of the
hierarchy, and interior bifurcations from these solutions to the third
level of the hierarchy, but never saw bifurcations from traveling
waves directly to the third level of the hierarchy (as we did not know
where to look for them).  This will be important to keep in mind in
problems such as the water wave, where exact solutions are not
expected to be found.

We believe we have accounted for all time-periodic solutions of the
Benjamin-Ono equation with a fixed spatial period, but do not know how
to prove this.  Even for the simplest case of a traveling wave, it is
surprisingly difficult to prove that the solitary and periodic waves
found by Benjamin \cite{benjamin:67} are the only possibilities;
see \cite{amick:toland:BO} and Appendix~\ref{appendix:unique}.  For the
closely related KdV equation \cite{ablowitz:segur, novikov, whitham},
substitution of $u(x,t)=u_0(x-ct)$ into the
equation leads to an ordinary differential equation for $u_0(x)$ with
periodic solutions involving Jacobi elliptic functions; see
e.g.~\cite{osborne:95}.  However, for Benjamin-Ono, the equation for
the traveling wave shape
is non-local due to the Hilbert transform.  Nevertheless, Amick and
Toland \cite{amick:toland:BO} have shown that any traveling wave
solution of Benjamin-Ono can be extended to the upper half-plane to
agree with the real part of a bounded, holomorphic function satisfying
a complex ODE; thus, in spite of non-locality, we are able to obtain
uniqueness by solving an initial value problem for $u_0(x)$.
Interestingly, these traveling wave shapes are rational functions of
$e^{ix}$, which are simpler than the cnoidal solutions of KdV.  This
analysis of traveling waves via holomorphic extension to the upper
half-plane is similar in spirit to the analysis of rapidly decreasing
solutions of Benjamin-Ono via the inverse scattering transform
\cite{fokas:ablowitz:83, kaup:matsuno:98}.  Thus, it may be possible
to prove that we have accounted for all periodic solutions of
Benjamin-Ono by developing a spatially periodic version of the IST
that is analogous to the study of Bloch eigenfunctions and Riemann
surfaces for the periodic KdV equation \cite{novikov, belokolos}, but
this has not yet been carried out.

This paper is organized as follows.  In Section~\ref{sec:mero}, we
describe meromorphic solutions \cite{case:BO:N:sol} of the
Benjamin-Ono equation, which are a class of solutions represented by a
system of $N$ particles evolving in the upper half-plane (or, in our
representation, the unit disk) according to a completely integrable
ODE.  We also show the relationship between the elementary symmetric
functions of these particles and the Fourier coefficients of the
solution, which were observed in numerical experiments to have
trajectories in the complex plane consisting of epicycles involving a
finite number of circular orbits.  In Section~\ref{sec:trav}, we
summarize the results of \cite{benj2} in the form of a theorem (not
proved in \cite{benj2}) enumerating all bifurcations from traveling
waves to the second level of the hierarchy of time-periodic solutions.
The new idea that allows us to prove the theorem is to show that all
the zeros of a certain polynomial lie inside the unit circle using
Rouch\'e's theorem.  In Section~\ref{sec:hierarchy}, we state a
theorem that parametrizes solutions at level $M$ of the hierarchy
through an explicit description of the particle trajectories
$\beta_j(t)$.  This theorem also describes the way in which different
levels of the hierarchy are connected together through bifurcation.
The proof of this theorem shows the relationship with previous studies
of multi-periodic solutions.  In Section~\ref{sec:examples}, we give
several examples of degenerate and non-degenerate bifurcations between
various levels of the hierarchy.  We also use these examples to
illustrate some of the topological changes that occur in the particle
trajectories along the paths of solutions between bifurcation states.
Finally, in Appendix~\ref{appendix:roots}, we give a direct proof that
the particles $\beta_j(t)$ in our formulas lie inside the unit disk in
the complex plane; in Appendix~\ref{appendix:unique}, we discuss
uniqueness of traveling wave solutions.

\section{Meromorphic Solutions and Particle Trajectories}
\label{sec:mero}

In this section,
we consider spatially periodic solutions of the Benjamin-Ono equation,
\begin{equation} \label{eqn:BO}
  u_{t}=Hu_{xx}-uu_{x}.
\end{equation}
Here $H$ is the Hilbert transform defined in (\ref{eqn:BO:intro}),
which has the symbol $\hat{H}(k)=-i\operatorname{sgn}(k).$ It is well
known \cite{case:BO:N:sol} that meromorphic solutions of
(\ref{eqn:BO}) of the form
\begin{equation} \label{eqn:mero1}
  u(x,t) = 2\real\left\{\sum_{l=1}^N \frac{2k}{e^{ik[x+kt-x_l(t)]}-1}\right\}
\end{equation}
exist, where $k$ is a real wave number and the particles $x_l(t)$
evolve in the upper half of the complex plane according to the
equation
\begin{equation} \label{eqn:xl:ode}
  \frac{dx_l}{dt} =
  \sum_{\parbox{.3in}{$\js m=1\\[-8pt]m\ne l$}}^N
  \frac{2k}{e^{-ik(x_m-x_l)}-1} +
  \sum_{m=1}^N\frac{2k}{e^{-ik(x_l-\bar{x}_m)}-1}, \qquad (1\le l\le N).
\end{equation}
This representation is useful for studying the dynamics of solitons of
(\ref{eqn:BO}) over $\mathbb{R}$, which may be obtained from
(\ref{eqn:mero1}) in the long wave-length limit $k\rightarrow0$.
However, over a fixed periodic domain $\mathbb{R}\big/2\pi\mathbb{Z}$,
we have found it more convenient to work with particles $\beta_l(t)$
evolving in the unit disk of the complex plane,
\begin{equation}
  \beta_l = e^{-i\bar{x}_l} \in \Delta:=\{z\,:\,|z|<1\},
\end{equation}
where a bar denotes complex conjugation.  Up to an additive constant
and a Galilean transformation, the solution $u$ in (\ref{eqn:mero1})
may then be written
\begin{equation} \label{eqn:u:rep}
  u(x,t) = \alpha_0 + \sum_{l=1}^N u_{\beta_l(t)}(x),
\end{equation}
where we have included the mean $\alpha_0$ as an additional parameter
of the solution and $u_\beta(x)$ is defined via
\begin{equation} \label{eqn:u:beta0}
  u_\beta(x) = \frac{4|\beta|\{\cos(x-\theta)-|\beta|\}}{
    1+|\beta|^2 - 2|\beta|\cos(x-\theta)}, \qquad
  \big(\beta = |\beta|e^{-i\theta}\big).
\end{equation}
From (\ref{eqn:xl:ode}) or direct substitution into (\ref{eqn:BO}),
the $\beta_l$ are readily shown to satisfy
\begin{equation} \label{eqn:beta:ode}
  \dot{\beta}_l = \sum_{\parbox{.3in}{$\js m=1\\[-8pt]m\ne l$}}^N
  \frac{2i}{\beta_l^{-1}-\beta_m^{-1}} +
  \sum_{m=1}^N \frac{2i\beta_l^2}{\beta_l - \bar{\beta}_m^{-1}}
  + i(1-\alpha_0)\beta_l, \qquad (1\le l\le N).
\end{equation}
It is awkward to work with $u_\beta(x)$ in physical space; however, in
Fourier space, it takes the simple form
\begin{equation} \label{u:beta:hat}
  \hat{u}_{\beta,k} =
  \begin{cases}
    2\bar{\beta}^{|k|}, & \quad k<0, \\
    0, & \quad k=0, \\ 2\beta^k, & \quad k>0.
  \end{cases}
\end{equation}
As a result, the Fourier coefficients $c_k(t)$
of $u(x,t)$ in (\ref{eqn:u:rep}) are simply power sums of the particle
trajectories,
\begin{equation}
  c_k(t) = \begin{cases} \alpha_0 & \quad k=0, \\
    2\bigl[\beta_1^k(t) + \cdots + \beta_N^k(t)\bigr], & \quad k>0,
    \end{cases}
\end{equation}
where $c_k=\bar{c}_{-k}$ for $k<0$.  In \cite{benj2}, it was found
numerically that although the $\beta_l$ often execute very complicated
periodic orbits, the elementary symmetric functions $\sigma_j$
defined via
\begin{equation}
  \sigma_0 = 1, \qquad \sigma_j = \sum_{l_1<\cdots<l_j}\beta_{l_1}\cdots
  \beta_{l_j}, \qquad (j=1,\dots,N)
\end{equation}
have orbits that are circles (or at worst, epicycles involving a
finite number of non-zero temporal Fourier coefficients) in the
complex plane.  As a consequence, the spatial Fourier coefficients
$c_k=2\opn{tr}(\Sigma^k)$, where $k\ge1$ and $\Sigma$ is the companion
matrix of the polynomial $P(z)=\prod(z-\beta_j)$, also have
trajectories that are epicycles, which we noticed immediately in our
numerical simulations.  All the solutions in this paper will be of the
form
\begin{equation}
  \prod_{l=1}^N [z-\beta_l(t)] =
  \sum_{j=0}^N (-1)^j\sigma_j(t)z^{N-j} = P(z,e^{-i\omega t}),
\end{equation}
where
\begin{equation} \label{eqn:P:laurent}
  P(z,\lambda)=\sum_{j=0}^N (-1)^j\tilde\sigma_j(\lambda)z^{N-j}
\end{equation}
is a monic polynomial in $z$ with coefficients
$\tilde\sigma_j$ that are Laurent polynomials in $\lambda$,
and such that for any $\lambda$ on the unit circle $S^1$ in the
complex plane, the roots $\beta_1$, \dots,$\beta_N$ of $P(\cdot,\lambda)$
lie inside the unit disk.

We may express the solution $u$ in (\ref{eqn:u:rep}) directly
in terms of $P$ as follows:
\begin{align}\label{eqn:u:from:P}
  u(x,t) &= \alpha_0 + \sum_{l=1}^N u_{\beta_l(t)}(x) =
  \alpha_0 + \sum_{l=1}^N 4\real\left\{\sum_{k=1}^\infty
    \beta_l(t)^k e^{ikx}\right\}  \\
  \notag
  &= \alpha_0 + \sum_{l=1}^N 4\real\left\{\frac{z}{z-\beta_l(t)}-1\right\}
  = \alpha_0 + 4\real\left\{\frac{z\pr_zP(z,\lambda)}{P(z,\lambda)}-N\right\},
  \quad \left(\parbox{.7in}{\centering$z = e^{-ix}$\\
    $\lambda=e^{-i\omega t}$}\right).
\end{align}
Note that $Hu=4\real\left\{-i\left[(z\partial z P)/P -
    N\right]\right\}$.  The choice $z=e^{-ix}$ (as opposed to
$e^{+ix}$) follows from the decision in (\ref{u:beta:hat}) to have
Fourier coefficients with positive wave numbers carry powers of
$\beta$ rather than $\bar{\beta}$, while the choice
$\lambda=e^{-i\omega t}$ leads to a natural sign convention when we
interpret the exponents in the Laurent polynomials
$\tilde{\sigma}_j(\lambda)$ in (\ref{eqn:P:laurent}) as measures of
the direction and velocity of the traveling waves obtained in certain
limits.
It was shown in \cite{benj2} that (\ref{eqn:u:from:P}) is a solution
of (\ref{eqn:BO}) if there is a constant $\gamma\in\mathbb{R}$ such
that
\begin{equation} \label{eqn:P:alg}
  \begin{aligned}
  \gamma P_{00}\bar{P}_{00}
  & + \bar{P}_{00}\big[P_{20} + \omega P_{01} +
  (\alpha_0-4N)P_{10}\big] \\
  & + P_{00}\big[\bar{P}_{20} + \omega\bar{P}_{01} +
  (\alpha_0-4N)\bar{P}_{10}\big] +
  2P_{10}\bar{P}_{10}=0,
  \end{aligned}
\end{equation}
where
\begin{equation}
  P_{jk}=(z\partial_z)^j(\lambda\partial_\lambda)^kP(z,\lambda)\biggr
  \vert_{\parbox[b]{.45in}{$\js z=e^{-ix}$ \\[-9pt]
      $\js \lambda=e^{-i\omega t}$}}.
\end{equation}
The goal of this paper is to find explicit formulas for the solutions
$P(z,\lambda)$ of (\ref{eqn:P:alg}), show how they fit in with the
previously known families of multi-periodic solutions described in
\cite{satsuma:ishimori:79, dobro:91, matsuno:04}, and determine how
these families are connected together through bifurcation.

\section{Paths connecting arbitrary traveling waves}
\label{sec:trav}

In \cite{benj2}, a classification of bifurcations from traveling waves
was proposed after all time-periodic solutions of the linearization of
(\ref{eqn:BO}) about traveling waves were found in closed form.  A
numerical continuation method was then developed to follow paths of
non-trivial time-periodic solutions beyond the realm of validity of
the linearization until another traveling wave was reached (or until
the solution blows up as the bifurcation parameter approaches a
critical value).  Through extensive data fitting of the numerical
solutions, the exact form of the solutions on this path was deduced.
In this section, we give an alternative formula for these exact
solutions that unifies the three cases described in \cite{benj2} and
makes it possible to show that the roots $\beta_l$ of the polynomial
$P(\cdot,\lambda)$ are inside the unit circle for $\lambda\in S^1$.

An $N$-hump traveling wave is uniquely determined by the mean,
$\alpha_0$, a complex number $\beta\in\Delta$, and a positive integer,
$N$:
\begin{equation} \label{eqn:N:trav}
  u_{\alpha_0,N,\beta}(x,t) =
  \alpha_0 + \sum_{l=1}^N u_{\beta_l(t)}(x), \qquad
  \beta_l(t) = \sqrt[N]\beta e^{-ict}, \qquad
  c = \alpha_0 - N\frac{1-3|\beta|^2}{1-|\beta|^2}.
\end{equation}
Here $\beta_l$ ranges over all $N$th roots of $\beta$.  This
solution may also be written
\begin{equation}
  u_{\alpha_0,N,\beta}(x,t) = u_{N,\beta}(x-ct)+c, \qquad
  u_{N,\beta} = N\frac{1-3|\beta|^2}{1-|\beta|^2} + Nu_\beta(Nx),
\end{equation}
where $u_{N,\beta}(x)=Nu_{1,\beta}(Nx)$ is the $N$-hump stationary
solution; hence, the traveling wave moves to the right if $c>0$.  We
can solve for $c$ and $\alpha_0$ in terms of the period
$T>0$ and a speed index $\nu\in\mathbb{Z}$ indicating how many
increments of $\frac{2\pi}{N}$ the wave moves to the right in one
period:
\begin{equation}
  cT = \frac{2\pi\nu}{N}, \qquad
  \alpha_0 = c + N\frac{1-3|\beta|^2}{1-|\beta|^2}.
\end{equation}
In order to bifurcate to a non-trivial time-periodic
solution, the period $T$ must be related to an eigenvalue
\begin{equation}
  \omega_{N,n} = \begin{cases} (n)(N-n), & \quad 1\le n\le N-1, \\
    (n+1-N)\Big[n+1+N\Big(1-\frac{1-3|\beta|^2}{1-|\beta|^2}\Big)\Big],
    & \quad n\ge N
  \end{cases}
\end{equation}
of the linear operator \cite{benj2} governing the evolution of
solutions of the linearization about the $N$-hump stationary solution:
\begin{equation}\label{eqn:omega:T}
  \omega_{N,n}T = \frac{2\pi m}{N}, \qquad 1\le m \in \begin{cases}
    n\nu + N\mathbb{Z}, & \quad 1\le n < N, \\
    (n+1)\nu + N\mathbb{Z}, & \quad n\ge N.
  \end{cases}
\end{equation}
This requirement
on the oscillation index $m$
enforces the condition that the linearized solution over the
stationary solution return to a phase shift of itself to account for
the fact that the traveling wave has moved during this time; see
\cite{benj2}.  Here we have used the fact that if
$u(x,t)=u_{N,\beta}(x)$ is a stationary solution and
\begin{equation}\label{eqn:add:const}
  U(x,t)=u(x-ct,t)+c
\end{equation}
is a traveling wave, then the solutions $v$ and $V$ of the
linearizations about $u$ and $U$, respectively, satisfy
$V(x,t)=v(x-ct,t)$.  The parameter $\beta$ together
with the four integers $(N,\nu,n,m)$ enumerate the bifurcations from
traveling waves, which comprise the first level of the hierarchy
of time-periodic solutions of the Benjamin-Ono equation, to the second
level of this hierarchy.  We will see later that other
bifurcations from traveling waves to higher levels of the hierarchy
also exist, which is interesting as they are not predicted from
counting dimensions of nullspaces in the linearization.

After (numerically) mapping out which bifurcations $(N,\nu,n,m)$ and
$(N',\nu',n',m')$ were connected by paths of non-trivial solutions,
it was found that $N$, $N'$, $\nu$ and $\nu'$ can be chosen
independently as long as
\begin{equation}
  N'<N, \qquad\qquad \nu' > \frac{N'}{N}\nu.
\end{equation}
The other parameters are then given by
\begin{equation}
  m = m' = N\nu'-N'\nu>0, \qquad n = N-N', \qquad n' = N-1.
\end{equation}
The following theorem proves that these numerical conjectures
are correct.

\begin{theorem} \label{thm1}
  Let $N$, $N'$, $\nu$ and $\nu'$ be integers satisfying
  $N>N'>0$ and $m=N\nu'-N'\nu>0$.  There is a four-parameter family of
  time-periodic solutions connecting the traveling wave bifurcations
  $(N',\nu',N-1,m)$ and $(N,\nu,N-N',m)$.  These solutions are of the
  form
\begin{equation}
  u(x,t) = \alpha_0 + \sum_{l=1}^N u_{\beta_l(t-t_0)}(x-x_0),
\end{equation}
where $\beta_1(t)$, \dots, $\beta_N(t)$ are the roots of the
polynomial $P(\cdot,e^{-i\omega t})$ defined by
\begin{equation}\label{eqn:P:def3}
  P(z,\lambda) = z^N + A\lambda^{\nu'}z^{N-N'} +
  B\lambda^{\nu-\nu'}z^{N'} + C\lambda^{\nu},
\end{equation}
with
\begin{gather}
  \label{eqn:ABC}
  A = \sqrt{\frac{N-N'+s+s'}{N+s+s'}}
  \sqrt{\frac{(N+s')s'}{N'(N-N')+(N+s')s'}}, \\[2pt]
\notag
  B = \sqrt{\frac{(N+s')s'}{N'(N-N')+(N+s')s'}}
  \sqrt{\frac{s}{N-N'+s}}, \qquad 
  C = \sqrt{\frac{s}{N-N'+s}}
  \sqrt{\frac{N-N'+s+s'}{N+s+s'}}, \\[2pt]
\notag
  \alpha_0 = \frac{N^2\nu'-(N')^2\nu}{m} - 2s - \frac{2N'(\nu'-\nu)}{m}s',
  \qquad \omega = \frac{2\pi}{T} = \frac{N'(N-N')(N+2s')}{m}.
\end{gather}
The four parameters are $s\ge0$, $s'\ge0$,
$x_0\in\mathbb{R}$ and $t_0\in\mathbb{R}$.  The $N$- and $N'$-hump
traveling waves occur when $s'=0$ and $s=0$, respectively.  When
both are zero, we obtain the constant solution $u(x,t)\equiv
\frac{N^2\nu'-(N')^2\nu}{m}$.
\label{thm:4}
\end{theorem}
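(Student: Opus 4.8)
The plan is to verify the two conditions that, by the setup of Section~\ref{sec:mero}, are sufficient for the claimed family to be a valid representation of time-periodic Benjamin--Ono solutions: first, that the explicit polynomial $P(z,\lambda)$ in (\ref{eqn:P:def3}) satisfies the algebraic identity (\ref{eqn:P:alg}) for a suitable real constant $\gamma$, so that $u$ in (\ref{eqn:u:from:P}) solves (\ref{eqn:BO}); and second, that for every $\lambda\in S^1$ the $N$ roots $\beta_l(\lambda)$ of $P(\cdot,\lambda)$ lie in the open disk $\Delta$, so that the summands $u_{\beta_l}$ are well defined. The remaining assertions---identification of the endpoints $s'=0$ and $s=0$ with the two traveling waves, the constant solution at $s=s'=0$, and the roles of $x_0,t_0$---are then read off directly.

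For the first condition, since $P$ has only four monomials, the quantities $P_{jk}=(z\pr_z)^j(\lambda\pr_\lambda)^kP$ are obtained termwise: the operator $z\pr_z$ multiplies the terms $z^N,\ A\lambda^{\nu'}z^{N-N'},\ B\lambda^{\nu-\nu'}z^{N'},\ C\lambda^\nu$ by their $z$-degrees $N,N-N',N',0$, and $\lambda\pr_\lambda$ by their $\lambda$-degrees $0,\nu',\nu-\nu',\nu$. Substituting these into (\ref{eqn:P:alg}) and using $\bar z=z^{-1}$, $\bar\lambda=\lambda^{-1}$ on the torus turns the left-hand side into a Laurent polynomial in $(z,\lambda)$ with finitely many monomials. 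Setting the coefficient of each monomial to zero yields a finite algebraic system in $A,B,C,\alpha_0,\omega,\gamma$; I expect the ``diagonal'' contributions $P_{00}\bar P_{00}$ and $P_{10}\bar P_{10}$ to fix the moduli $A,B,C$ (hence (\ref{eqn:ABC})) and the cross terms to pin down $\alpha_0$, $\omega$, and the compatibility value of $\gamma$. This step is lengthy but mechanical.

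The real obstacle is the second condition, for which Rouch\'e's theorem (or equivalently the argument principle) is the natural tool but cannot be applied naively, since $A+B+C>1$ in general. I would first reduce the question to non-vanishing of $P$ on the torus: dividing (\ref{eqn:P:def3}) by $z^N$ and writing $\xi=\lambda^{\nu'}z^{-N'}$, $\eta=\lambda^{\nu-\nu'}z^{N'-N}$, one checks that the integer matrix sending $(\arg z,\arg\lambda)$ to $(\arg\xi,\arg\eta)$ has determinant $N\nu'-N'\nu=m\neq0$, so as $(z,\lambda)$ ranges over $S^1\times S^1$ the pair $(\xi,\eta)$ ranges over all of $S^1\times S^1$. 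Thus $P$ vanishes somewhere on the torus iff $1+A\xi+B\eta+C\xi\eta=0$ for some unit $\xi,\eta$; solving for $\eta$ and imposing $|\eta|=1$ reduces this to the solvability of $\real\,\xi=-(1+A^2-B^2-C^2)/\bigl(2(A-BC)\bigr)$ with $|\xi|=1$. Hence $P\neq0$ on the torus precisely when
\begin{equation}
  (1+A+B+C)(1+A-B-C)(1-A+B-C)(1-A-B+C)>0,
\end{equation}
the factored form of $(1+A^2-B^2-C^2)^2-4(A-BC)^2$. The crux is then to prove this inequality from the explicit values (\ref{eqn:ABC}); writing $A=ab$, $B=bd$, $C=da$ with $a^2,b^2,d^2$ the three ratios appearing in (\ref{eqn:ABC}), each of which lies in $[0,1)$ for $s,s'\ge0$ and $N>N'>0$, I expect each of the last three factors to be positive after simplification, the first being positive automatically. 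This is the heart of the argument and the step most likely to require care.

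Finally, once non-vanishing on $|z|=1$ is established for all parameter values, the number of roots of $P(\cdot,\lambda)$ in $\Delta$ is an integer that varies continuously with $(s,s',\lambda)$ over the connected set $[0,\infty)^2\times S^1$, hence is constant; evaluating at $s=s'=0$, where $P=z^N$, shows this constant equals $N$, so all roots lie in $\Delta$ throughout. To close, I would substitute $s'=0$ (so $A=B=0$) and $s=0$ (so $B=C=0$) into (\ref{eqn:P:def3}), observe that $P$ factors as $z^N+C\lambda^\nu$ and $z^{N-N'}(z^{N'}+A\lambda^{\nu'})$ respectively, and match these against the $N$- and $N'$-hump traveling waves of (\ref{eqn:N:trav}) and the bifurcation indices of Section~\ref{sec:trav}; the translations $x_0,t_0$ account for the two phase parameters. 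The detailed root-location bound may alternatively be carried out as in Appendix~\ref{appendix:roots}.
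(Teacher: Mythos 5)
Your proposal is sound, and on the crucial step it takes a genuinely different route from the paper. The first step is the same in substance: the paper quotes from \cite{benj2} the reduction of (\ref{eqn:P:alg}) for the ansatz (\ref{eqn:P:def3}) to a few polynomial identities in $A,B,C,\alpha_0,\omega,\gamma$ (its equations (\ref{eqn:alg1})--(\ref{eqn:alg3})) and verifies them for (\ref{eqn:ABC}) by computer algebra, which is exactly your ``expand in Laurent monomials and equate coefficients'' computation; likewise your endpoint identification at $s'=0$ and $s=0$ matches the paper's. The divergence is in proving that the roots stay in $\Delta$. The paper applies Rouch\'e's theorem on $|z|=1$, comparing $P$ against $f_2=z^N+A\lambda^{\nu'}z^{N-N'}$ and $f_3=z^N+B\lambda^{\nu-\nu'}z^{N'}$ (each of which manifestly has all $N$ roots in $\Delta$, so the count comes for free), and shows by a four-case analysis on the relative sizes of $A,B,C$ that at least one of the lower bounds $(1-A)^2-(B-C)^2$, $(1-B)^2-(A-C)^2$ is positive on the torus. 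You instead reduce to non-vanishing of $1+A\xi+B\eta+C\xi\eta$ for unimodular $(\xi,\eta)$ --- using that the exponent matrix has determinant $m\neq0$, so the substitution sweeps the whole $(\xi,\eta)$-torus, which makes your criterion exact rather than merely sufficient --- and then recover the root count by integer-valued continuity (winding number) over the connected parameter set, anchored at $s=s'=0$ where $P=z^N$. The two arguments rest on the same algebra: your four linear factors pair up into the paper's two Rouch\'e bounds, $(1-A)^2-(B-C)^2=(1-A-B+C)(1-A+B-C)$ and $(1-B)^2-(A-C)^2=(1-A-B+C)(1+A-B-C)$, and both exploit the multiplicative structure $A=ab$, $B=bd$, $C=da$ with $a,b,d\in[0,1)$ (the paper uses it in the weaker form $A\ge BC$, $B\ge CA$, $C\ge AB$). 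Finally, the step you flagged as the likely sticking point is in fact immediate from that structure, so there is no gap: since $a,b,d<1$,
\begin{equation*}
  1+A-B-C \;=\; 1+ab-d(a+b) \;\ge\; 1+ab-(a+b) \;=\; (1-a)(1-b) \;>\; 0,
\end{equation*}
and the factors $1-A+B-C=1+bd-a(b+d)$ and $1-A-B+C=1+ad-b(a+d)$ are positive by the same one-line estimate with the roles of $a,b,d$ permuted, while $1+A+B+C>0$ is trivial. What your route buys is a necessary-and-sufficient zero-free criterion on the torus and a structural use of the hypothesis $m\neq0$; what the paper's buys is immediacy --- Rouch\'e delivers the count of roots in $\Delta$ directly, with no need for the torus-surjectivity or continuity arguments.
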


\begin{proof}
  Without loss of generality, we may assume $x_0=0$ and $t_0=0$.
  It was shown in \cite{benj2} that $P(z,\lambda)$ of the
  form (\ref{eqn:P:def3}) satisfies (\ref{eqn:P:alg}) if
\begin{align}
  &\gamma = (3N-\alpha_0)N - \nu\omega, \\
  \label{eqn:alg1}
&  [(N')^2 - 2 N N' + N'\alpha_0 - \nu' \omega]B + 
  [(N')^2 + 2 N N' - N'\alpha_0 + \nu' \omega]AC = 0, \\
   \notag
& \bigl[3N^2 - 4NN' + (N')^2 - (N-N')\alpha_0 + (\nu-\nu')\omega\bigr]BC
\\
& \hspace*{2in}
   \label{eqn:alg2}
   - \bigl[N^2 - (N')^2 - (N-N')\alpha_0 + (\nu-\nu')\omega\bigr]A = 0,
\\ \notag
& (N\alpha_0 - \nu\omega - N^2) +
\big[(2N'-N)\alpha_0 + (\nu-2\nu')\omega + 3N^2 -8NN' + 4(N')^2\big]B^2 \\
\label{eqn:alg3}
&\quad + \big[(N-2N')\alpha_0 + 4(N')^2-N^2 +
(2\nu'-\nu)\omega\big]A^2 + \big[(3N-\alpha_0)N+\nu\omega\big]C^2 = 0.
\end{align}
Using a computer algebra system, it is easy to check that
(\ref{eqn:alg1})--(\ref{eqn:alg3}) hold when $A$, $B$, $C$, $\alpha_0$
and $\omega$ are defined as in (\ref{eqn:ABC}).  When $s'=0$, we have
$A=B=0$ and $\jd C=\sqrt{\frac{s}{N+s}}$\, so that
\begin{equation*}
  \beta_l(t) = \sqrt[N]{-C\lambda^{\nu}} = \sqrt[N]{-C}e^{-ict}, \qquad
  c = \frac{\omega\nu}{N} = \frac{N'(N-N')\nu}{m} = \alpha_0 - N
  \frac{1-3C^2}{1-C^2},
\end{equation*}
where each $\beta_l$ is assigned a distinct $N$th root of $-C$.  By
(\ref{eqn:N:trav}), this is an $N$-hump traveling wave with speed index
$\nu$ and period $T=\frac{2\pi}{\omega}$.  Similarly, when $s=0$, we
have $B=C=0$ and $\jd A=\sqrt{\frac{s'}{N'+s'}}$\, so that
\begin{equation*}
  \beta_l(t)=\left\{\begin{array}{cc}
      \sqrt[N']{-A}e^{-ict} & l\le N' \\
      0 & l>N' \end{array}\right\}, \qquad
  c = \frac{\omega\nu'}{N'} = \frac{(N-N')(N+2s')\nu'}{m} =
  \alpha_0 - N'\frac{1-3A^2}{1-A^2},
\end{equation*}
which is an $N'$-hump traveling wave with speed index $\nu'$ and
period $T=\frac{2\pi}{\omega}$.

Finally, we show that the roots of $P(\cdot,\lambda)$ are
inside the unit disk for any $\lambda$ on the unit circle, $S^1$.
We will use Rouch\'e's theorem \cite{ahlfors}.  Let
\begin{alignat*}{4}
    f_1(z) &= z^N & + \,&A\lambda^{\nu'}z^{N-N'} & &+B\lambda^{\nu-\nu'}z^{N'}
    & &+ C\lambda^\nu, \\
    f_2(z) &= z^N & + \,&A\lambda^{\nu'}z^{N-N'},\hspace*{-.1in} \\
    f_3(z) &= z^N & & & &+B\lambda^{\nu-\nu'}z^{N'}.\hspace*{-.1in}
\end{alignat*}
From (\ref{eqn:ABC}), we see that $\{A,B,C\}\subseteq[0,1)$,
$A\ge BC$, $B\ge CA$ and $C\ge AB$.  Thus,
\begin{equation}
\begin{aligned}
  d_2(z) := |f_2(z)|^2 &- |f_1(z)-f_2(z)|^2 = |\lambda^{-\nu'}z^{N'}+A|^2 -
  |B\lambda^{-\nu'}z^{N'}+C|^2 \\
  &= 1 + A^2 - B^2 - C^2 + 2(A-BC)\cos \theta \ge
  (1-A)^2 - (B-C)^2,
\end{aligned}
\end{equation}
where $\lambda^{-\nu'}z^{N'}=e^{i\theta}$.  Similarly,
\begin{equation}
  d_3 := |f_3(z)|^2 - |f_1(z)-f_3(z)|^2 \ge (1-B)^2 - (A-C)^2.
\end{equation}
Note that
\begin{alignat*}{3}
  &B\le A, \quad C\le B \quad & &\Rightarrow \quad
  B-C \le B-AB < 1-A \quad & &\Rightarrow \quad
  d_2(z)>0 \text{ for } z\in S^1, \\
  &B\le A, \quad C>B \quad & &\Rightarrow \quad
  |C-A|<1-B \quad & &\Rightarrow \quad
  d_3(z)>0 \text{ for } z\in S^1, \\
  &A\le B, \quad C\le A \quad & &\Rightarrow \quad
  A-C \le A-AB < 1-B \quad & &\Rightarrow \quad
  d_3(z)>0 \text{ for } z\in S^1, \\
  &A\le B, \quad C>A \quad & &\Rightarrow \quad
  |C-B|<1-A \quad & &\Rightarrow \quad
  d_2(z)>0 \text{ for } z\in S^1.
\end{alignat*}
Thus, in all cases, $f_1(z)=P(z,\lambda)$ has the same number of zeros
inside $S^1$ as $f_2(z)$ or $f_3(z)$, which each have $N$ roots inside
$S^1$.  Since $f_1(z)$ is a polynomial of degree $N$, all the roots are
inside $S^1$.
\end{proof}

\section{An infinite hierarchy of interior bifurcations}
\label{sec:hierarchy}

Next we wish to find all possible cascades of interior bifurcations
from these already non-trivial solutions to more and more complicated
time-periodic solutions.
The most interesting consequence of the following theorem is that
there are some traveling waves with more bifurcations to non-trivial
time-periodic solutions than predicted by counting the dimension of
the kernel of the linearization of the map measuring deviation from
time-periodicity.  This will be illustrated in various examples in
Section~\ref{sec:examples}.

\begin{theorem} \label{thm2}
Let $M\ge2$ be an integer and choose
\begin{align*}
  &k_1,\dots,k_M\in\mathbb{N},
  \qquad \text{(positive integers, not necessarily distinct or monotonic),} \\
  &\nu_1,\dots,\nu_M\in\mathbb{Z}, \qquad \text{(arbitrary integers
    satisfying $\nu_{j-1}<\frac{k_{j-1}}{k_j}\nu_j$ for $j\ge2$)}.
\end{align*}
Now define the positive quantities
\begin{align}
  \label{eqn:mj:def}
  &m_j = k_{j-1}\nu_j - k_j\nu_{j-1}, \quad
   \tau_j = \frac{k_j(k_j + k_{j-1})k_{j-1}}{m_j}, \quad
  \gamma_j = \frac{2k_jk_{j-1}}{m_j}, \quad (2\le j\le M).
\end{align}
Let $J=\opn{argmax}_{2\le j\le M} \tau_j$.  If there is a tie, $J$ can
be any of the candidates.  Then there is an $M+2$ parameter family of
time-periodic solutions of the Benjamin-Ono equation parametrized by
\begin{equation}
  s_1\ge0, \qquad s_J\ge0, \qquad x_{j0}\in\mathbb{R}, \qquad
  (1\le j\le M)
\end{equation}
and constructed as follows.  First, we define
\begin{gather} \label{eqn:sj1}
  s_j = \frac{\tau_J-\tau_j}{\gamma_j} + \frac{\gamma_J}{\gamma_j}s_J,
  \qquad\quad (2\le j\le M, \;\; j\ne J), \\
  q_1 = s_1, \qquad p_1 = s_1 + k_1, \qquad
  q_j = p_{j-1}+s_j, \qquad p_j = q_j + k_j, \qquad (2\le j\le M)
\end{gather}
so that $s_j\ge0$ for $1\le j\le M$ and
\begin{equation} \label{eqn:qp}
  \xymatrix@C-2pc@R-1.5pc{
    0 \ar@/^1pc/[rr]^{s_1}
    & \le & q_1 \ar@/^1pc/[rr]^{k_1}
    & < & p_1 \ar@/^1pc/[rr]^{s_2}
    & \le & q_2 \ar@/^1pc/[rr]^{k_2}
    & < & \cdots \ar@/^1pc/[rr]^{s_{M-1}}
    & \le & q_{M-1} \ar@/^1pc/[rr]^{k_{M-1}}
    & < & p_{M-1} \ar@/^1pc/[rr]^{s_M}
    & \le & q_M \ar@/^1pc/[rr]^{k_M}
    & < & p_M.
  }
\end{equation}
For any subset $S$ of $\mc{M}=\{1,\dots,M\}$, we denote
the complement by $S'=\mc{M}\setminus S$ and define
\begin{equation} \label{eqn:CS:def}
  k_S = \sum_{j\in S}k_j, \qquad
  \nu_{S'} = \sum_{m\in S'} \nu_m, \qquad
  C_S = \Biggl( \prod_{(j,m) \in S\times S'} a_{jm}\Biggr)
  \Biggl( \prod_{m\in S'} b_m \Biggr),
\end{equation}
where
\begin{equation} \label{eqn:ab:def}
  a_{jm} = \sqrt{\frac{(p_m-q_j)(q_m-p_j)}{(q_m-q_j)(p_m-p_j)}}, \qquad
  b_m = \sqrt{\frac{q_m}{p_m}}e^{-ik_mx_{m0}}.
\end{equation}
Then
\begin{equation} \label{eqn:u:a0:beta}
  u(x,t) = \alpha_0 + \sum_{l=1}^N u_{\beta_l(t)}(x)
\end{equation}
is a periodic solution of the Benjamin-Ono equation, where
$\jt N = k_\mc{M} = \sum_{j=1}^M k_j$,
\begin{equation} \label{eqn:a0:w}
  \alpha_0 = \biggl(2N-k_1+\frac{\nu_1}{k_1}\tau_J\biggr) 
  - 2s_1 + \frac{\nu_1}{k_1}\gamma_J s_J, \qquad
  \omega = \frac{2\pi}{T} = \tau_J + \gamma_J s_J,
\end{equation}
and $\beta_1(t),\dots,\beta_N(t)$ are the roots of the polynomial
$z\mapsto P(z,e^{-i\omega t})$ given by
\begin{equation} \label{eqn:P:CS}
  P(z,\lambda) = \sum_{S\in\mc{P}(\mc{M})}
  C_S \lambda^{\nu_{S'}}z^{k_S}.
\end{equation}
When $M=2$, this representation coincides with that of Theorem~\ref{thm1}
if we set
\begin{alignat*}{4}
  k_1 &= N-N', &\qquad \nu_1 &= \nu-\nu', &\qquad s_1 &= s, & \qquad
  x_{10} &= x_0 - \frac{\nu_1}{k_1}\omega t_0, \\
  k_2 &= N', &\qquad \nu_2 &= \nu', &\qquad s_2 &= s',
  & \qquad x_{20} &= x_0 - \frac{\nu_2}{k_2}\omega t_0.
\end{alignat*}
It reduces to a traveling wave when $s=0$ or $s'=0$ and to a constant
solution when both are zero.  Similarly, when
$M\ge3$, the solution reduces to a simpler solution in this same
hierarchy (with $M$ replaced by $\wtil{M}=M-1$) when $s_1$ or
$s_J$ reaches zero.  Specifically, if $s_1=0$, then
$P(z,\lambda) = z^{k_1}\wtil{P}(z,\lambda)$, where $\wtil{P}(z,\lambda)$
corresponds to the parameters
\begin{equation} \label{eqn:s1eq0}
  \tilde{k}_j = k_{j+1}, \quad
  \tilde{\nu}_j = \nu_{j+1}, \quad
  \tilde{s}_j = s_{j+1}, \quad 
  \tilde{x}_{j0} = x_{j+1,0}, \quad
  (1\le j\le \wtil{M}).
\end{equation}
We interpret this as an annihilation of $k_1$
particles $\beta_l$ at the origin.  If $s_J=0$,
we have $P(z,\lambda) = \wtil{P}(z,\lambda)$, where
\begin{equation}\label{eqn:sJeq0}
\begin{array}{lllll}
  \tilde{k}_j = k_j, &
  \tilde{\nu}_j = \nu_j, &
  \tilde{s}_j = s_j, &
  \tilde{x}_{j0} = x_{j0}, &
  (1\le j\le J-2), \\
  \tilde{k}_j = k_j+k_J, &
  \tilde{\nu}_j = \nu_j+\nu_J, &
  \tilde{s}_j = s_j, &
  \tilde{x}_{j0} = \frac{k_jx_{j0}+k_Jx_{J0}}{k_j+k_J}, &
  (j=J-1), \\
  \tilde{k}_j = k_{j+1}, &
  \tilde{\nu}_j = \nu_{j+1}, &
  \tilde{s}_j = s_{j+1}, &
  \tilde{x}_{j0} = x_{j+1,0},\;\; &
  (J\le j\le \wtil{M}).
\end{array}
\end{equation}
If several $s_j$ are zero when $s_J=0$ (i.e.~if a tie occurs when
choosing $J=\opn{argmax}_{2\le j\le M}\tau_j$), the bifurcation is
degenerate and any subset of the indices for which $s_j=0$ may be
removed using the rule (\ref{eqn:sJeq0}) repeatedly (once for each
index removed, with $J$ ranging over these indices in reverse order to
avoid re-labeling), allowing for bifurcations that jump several levels
in the hierarchy at once.
\end{theorem}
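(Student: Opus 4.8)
The plan is to verify that the polynomial $P(z,\lambda)$ defined by the symmetric-function formula (\ref{eqn:P:CS}) satisfies the algebraic identity (\ref{eqn:P:alg}), which by the earlier result from \cite{benj2} guarantees that (\ref{eqn:u:a0:beta}) solves the Benjamin-Ono equation. The first task is to substitute $z=e^{-ix}$, $\lambda=e^{-i\omega t}$ and assemble the quantities $P_{00},P_{10},P_{20},P_{01}$ from the definition $P_{jk}=(z\pr_z)^j(\lambda\pr_\lambda)^kP$. Because each monomial $C_S\lambda^{\nu_{S'}}z^{k_S}$ is an eigenfunction of both $z\pr_z$ and $\lambda\pr_\lambda$, the operators $P_{jk}$ act diagonally, producing factors $k_S^j$ and $\nu_{S'}^k$. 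Expanding (\ref{eqn:P:alg}) then yields a sum over pairs $(S,T)$ of subsets of $\mc{M}$ of terms involving $C_SC_T$ (and their conjugates) times polynomials in $k_S,k_T,\nu_{S'},\nu_{T'},\omega,\alpha_0,\gamma$. The goal is to show this sum vanishes identically once $\alpha_0,\omega$ are given by (\ref{eqn:a0:w}), $\gamma$ by the appropriate constant, and the coefficients $C_S$ by (\ref{eqn:CS:def})--(\ref{eqn:ab:def}).

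The key structural step is to group the vanishing according to the spatial and temporal wave numbers. Collecting like powers $z^{k_S+k_T}\lambda^{\nu_{S'}+\nu_{T'}}$ partitions the identity (\ref{eqn:P:alg}) into a finite system of scalar equations indexed by the distinct pairs of exponents, generalizing the three equations (\ref{eqn:alg1})--(\ref{eqn:alg3}) of Theorem~\ref{thm1}. I would next exploit the product structure of $C_S$ in (\ref{eqn:CS:def}): the ratio $C_{S\cup\{i\}}/C_S$ for $i\in S'$ factors through the $a_{jm}$ and $b_m$, so the relations among coefficients reduce to two-index identities in the $a_{jm}$. The crucial algebraic fact to establish is a Plücker-type quadratic relation satisfied by the $a_{jm}$ coming from (\ref{eqn:ab:def}), namely that the interval data $(q_j,p_j)$ arranged as in the chain (\ref{eqn:qp}) make the requisite combinations telescope. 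Here the specific affine forms $\tau_j,\gamma_j,m_j$ in (\ref{eqn:mj:def}) and the choice $J=\opn{argmax}\tau_j$ enter: they are precisely what force $s_j\ge0$ via (\ref{eqn:sj1}) and make $\alpha_0,\omega$ consistent across all the scalar equations simultaneously.

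The reduction claims at the end follow more directly and I would treat them separately. For $s_1=0$ we have $q_1=0$, so every subset $S$ with $1\notin S$ contributes a factor $b_1=\sqrt{q_1/p_1}=0$ through $C_S$; hence only subsets containing $1$ survive, each carrying the common factor $z^{k_1}$, which gives $P=z^{k_1}\wtil P$ with $\wtil P$ matching the shifted parameters (\ref{eqn:s1eq0}). For $s_J=0$ we have $q_J=p_{J-1}$, making $a_{J-1,J}$ (or the relevant $a_{jm}$ linking consecutive blocks) degenerate so that the two blocks $J-1$ and $J$ fuse; checking that the fused coefficients reproduce (\ref{eqn:sJeq0}) is a bookkeeping computation on the products in (\ref{eqn:CS:def}). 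The degenerate multi-level jump is then just iteration of this fusion over a tied argmax set.

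I expect the main obstacle to be the combinatorial verification that the full subset-indexed system collapses to the correct scalar identities, i.e.\ controlling the cancellation in (\ref{eqn:P:alg}) across all $2^M$ subsets rather than the three terms of the $M=2$ case. The honest route, mirroring the remark after (\ref{eqn:alg3}) that one checks (\ref{eqn:alg1})--(\ref{eqn:alg3}) ``using a computer algebra system,'' is to prove the two-index Plücker relations for the $a_{jm}$ by hand and then argue that these local relations, together with the telescoping chain (\ref{eqn:qp}), imply the global vanishing by induction on $M$; the base case $M=2$ is exactly Theorem~\ref{thm1}.
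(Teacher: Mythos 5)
Your overall strategy --- verifying the algebraic identity (\ref{eqn:P:alg}) directly for the polynomial (\ref{eqn:P:CS}) --- is precisely the route the paper declines to take (``Rather than show directly that $P(z,\lambda)$ in (\ref{eqn:P:CS}) satisfies (\ref{eqn:P:alg})\dots''), and as written your plan has a genuine gap at its center. The entire burden of the proof falls on the ``Pl\"ucker-type quadratic relations'' for the $a_{jm}$ and on the claim that these local relations force the subset-pair system arising from (\ref{eqn:P:alg}) to collapse; you name these relations but never state or prove them, and you give no mechanism for the induction on $M$ that is supposed to finish the job. The induction as described cannot work: the only passage from level $M$ to level $M-1$ in your argument is via the degenerations $s_1=0$ or $s_J=0$, and these reach only codimension-one boundary strata of the level-$M$ parameter family. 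An identity in $(s_1,s_J)$ and the phases $x_{j0}$ that is known to hold on those strata is not thereby known to hold in the interior, so the inductive hypothesis at level $M-1$ gives you nothing about generic level-$M$ parameters. The paper's actual proof avoids this entirely: when all inequalities in (\ref{eqn:qp}) are strict, it identifies (\ref{eqn:u:a0:beta}) as a Galilean transform (\ref{eqn:add:const}) of a known multi-periodic solution (\ref{eqn:u:satsu}) of Satsuma--Ishimori/Matsuno, derives (\ref{eqn:a0:w}) and (\ref{eqn:sj1}) from the periodicity conditions (\ref{eqn:kj:c}), and uses the reductions (\ref{eqn:s1eq0}) and (\ref{eqn:sJeq0}) only for the boundary cases, terminating at Theorem~\ref{thm1}. (Your treatment of those two reductions, incidentally, is correct and matches the paper's.)

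A second gap: you never address where the roots $\beta_l(t)$ lie. The equivalence between the representation (\ref{eqn:u:a0:beta}) and the formula (\ref{eqn:u:from:P}), and indeed the criterion from \cite{benj2} that (\ref{eqn:P:alg}) implies $u$ solves (\ref{eqn:BO}), both rest on expanding $z/(z-\beta_l)-1$ as a geometric series and on the resulting identification of positive and negative Fourier modes in the Hilbert transform; this requires $|\beta_l(t)|<1$ for all $t$. This is not a technicality: the paper devotes Appendix~\ref{appendix:roots} (Theorem~\ref{thm3}) to it, proving the root location via a Cauchy-matrix determinant representation of $P$ and a residue argument. Without this ingredient, even a completed verification of (\ref{eqn:P:alg}) would not yield the claimed representation of $u$ in terms of particles moving in the unit disk.
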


\begin{proof}
  Rather than show directly that $P(z,\lambda)$ in (\ref{eqn:P:CS})
  satisfies (\ref{eqn:P:alg}), we show that each of our solutions
  differs from a multi-periodic solution
  \cite{satsuma:ishimori:79,matsuno:04} by at most a transformation of
  the form (\ref{eqn:add:const}).  We give a direct proof that all the
  zeros of $P(\cdot,\lambda)$ lie inside the unit circle in
  Appendix~\ref{appendix:roots} as concluding this from the combined
  results of \cite{dobro:91} and \cite{matsuno:04} is complicated.
  If all the inequalities in (\ref{eqn:qp}) are strict, then it is
  known \cite{matsuno:04} that
\begin{equation}\label{eqn:u:satsu}
  U = 2i\der{}{x}\log\frac{f'}{f}
\end{equation}
satisfies the Benjamin-Ono equation (\ref{eqn:BO}) with
\begin{gather}
  \label{eqn:f:satsu1}
  f' = \sum_{\mu=0,1}\exp\left[
    \sum_{j=1}^M \mu_j\left(i\theta_j - \phi_j -
      \frac{1}{2}\sum_{m\ne j}^{M}A_{jm}\right) +
    \sum_{j<m}^{(M)} \mu_j\mu_m A_{jm}\right], \\[2pt]
  \label{eqn:f:satsu2}
  f = \sum_{\mu=0,1}\exp\left[
    \sum_{j=1}^M \mu_j\left(i\theta_j + \phi_j -
      \frac{1}{2}\sum_{m\ne j}^{M}A_{jm}\right) +
    \sum_{j<m}^{(M)} \mu_j\mu_m A_{jm}\right], \\
  \theta_j = k_j(x-c_jt-x_{j0}), \qquad
  e^{2\phi_j} = \frac{p_j}{q_j}, \qquad
  k_j = p_j - q_j, \qquad c_j = p_j + q_j, \\
  \exp(A_{jm}) = \frac{(q_m-q_j)(p_m-p_j)}{(p_m-q_j)(q_m-p_j)}
  = \frac{(c_m-c_j)^2-(k_m-k_j)^2}{(c_m-c_j)^2-(k_m+k_j)^2},
\end{gather}
where $x_{j0}\in\mathbb{R}$ are $M$ arbitrary phase parameters and the
notation $\sum_{\mu=0,1}$ indicates a summation over all possible
combinations of $\mu_1=0,1$; $\mu_2=0,1$; \dots; $\mu_M=0,1$.  (The
notation $\sum_{j<m}^{(M)}$ indicates that $j$ and $m$ both vary
between $1$ and $M$ such that $j<m$, while $\sum_{m\ne j}^M$ indicates
that $m$ varies from $1$ to $M$ omitting $m=j$).

We write $f'$ and $f$ in (\ref{eqn:f:satsu1}) and
(\ref{eqn:f:satsu2}) as sums over all subsets $S$ of
$\mc{M}=\{1,\dots,M\}$:
\begin{equation}
  f',f\, = \sum_{S\in\mc{P}(\mc{M})}
  \bigg(\prod_{(j,m)\in S\times S'} e^{-\frac{1}{2}A_{jm}}\bigg)
  \prod_{j\in S} e^{ik_j(x-c_jt-x_{j0})\mp\phi_j}, \qquad
  (S' = \mc{M}\setminus S),
\end{equation}
where $-\phi_j$ is used for $f'$ and $+\phi_j$ is used for $f$.
Next we observe that
\begin{equation}\label{eqn:u:from:U}
  u(x,t)=U(x-ct,t)+c
\end{equation}
will be time-periodic with period~$T=\frac{2\pi}{\omega}$ if there
exist integers $\nu_j\in\mathbb{Z}$ such that
\begin{equation} \label{eqn:kj:c}
  k_j(c+c_j) = \nu_j\omega, \qquad (1\le j\le M).
\end{equation}
Then we have
\begin{gather}
  f'(x-ct,t) =
  \bigg(\prod_{j\in\mc{M}}e^{ik_jx}
    \bigg) \sum_{S\in\mc{P}(\mc{M})} C_S
    \bigg(\prod_{m\in S'}e^{-i\nu_m\omega t}\bigg)\bigg(
    \prod_{j\in S}e^{-ik_jx}\bigg), \\
\notag
  f(x-ct,t) = 
  \bigg(\prod_{j\in\mc{M}}e^{-ik_jx_{j0}}e^{-i\nu_j\omega t}e^{\phi_j}
    \bigg) \sum_{S\in\mc{P}(\mc{M})} \overline{C_S}
    \bigg(\prod_{m\in S'}e^{i\nu_m\omega t}\bigg)\bigg(
    \prod_{j\in S}e^{ik_jx}\bigg),
\end{gather}
with $C_S$ as in (\ref{eqn:CS:def}) above.  The complex conjugation in
$f$ comes from interchanging $S$ and $S'$ in the sum after factoring
out $\big(\prod_{j\in\mc{M}}\cdots\big)$.  It follows that $u$ in
(\ref{eqn:u:from:U}) is given by
\begin{equation} \label{eqn:u:from:gh}
  u = c-2N + 2i\partial_x \log \frac{g}{h} =
  \alpha_0 +
  2\left(\frac{i\partial_xg}{g}-N\right) +
  2\left(\frac{-i\partial_xh}{h}-N\right),
\end{equation}
where
\begin{equation}
  \begin{gathered}
    N = \sum_{j=1}^M k_j, \qquad \alpha_0 = c+2N, \qquad
    g(x,t)=P(e^{-ix},e^{-i\omega t}), \qquad h=\bar{g}, \\
    P(z,\lambda) = \sum_{S\in\mc{P}(\mc{M})}C_S \lambda^{\nu_{S'}}z^{k_S},
    \qquad \nu_{S'} = \sum_{m\in S'}\nu_m, \qquad
    k_S = \sum_{j\in S}k_j.
  \end{gathered}
\end{equation}
With $\lambda$ fixed, $P$ is a monic polynomial in $z$ of degree $N$.
If we complexify $x$ and fix $t$, then
\begin{equation}
  P(e^{-ix},e^{-i\omega t})=0 \quad \Leftrightarrow \quad
  f'(x-ct,t)=0 \quad \Leftrightarrow \quad
  f(\bar{x}-ct,t)=0,
\end{equation}
so all the zeros of $P$ are inside the unit circle iff all the zeros
of $f$ are in the upper half-plane and all the zeros of $f'$ are in
the lower half-plane.  These properties of $f$ and $f'$ were assumed
to be true in \cite{satsuma:ishimori:79}, leaving a small gap in
their proof (acknowledged in the paper); we give a proof in
Appendix~\ref{appendix:roots}.  The right hand
side of (\ref{eqn:u:from:gh}) is equal to the right hand side of
(\ref{eqn:u:from:P}), which establishes the representation
(\ref{eqn:u:a0:beta}) of $u(x,t)$ in terms of the trajectories
$\beta_l(t)$ of the roots of $P(\cdot,e^{-i\omega t})$.

Eliminating $c$ from (\ref{eqn:kj:c}) and using
$(c_j-c_{j-1})=(k_{j-1}+k_j+2s_j)$, we find that
\begin{equation}
  k_jk_{j-1}(k_{j-1}+k_j+2s_j) = (k_{j-1}\nu_j - k_j\nu_{j-1})\omega,
  \qquad (2\le j\le M).
\end{equation}
This shows that $m_j$ in (\ref{eqn:mj:def}) must be positive.
Eliminating $\omega$, we find that
\begin{equation} \label{eqn:tjsj}
  \tau_j + \gamma_js_j = \tau_J + \gamma_J s_J, \quad\qquad
  j,J\in\{2,\dots,M\}.
\end{equation}
Choosing $J=\opn{argmax}_{2\le j\le M}\tau_j$ and solving
(\ref{eqn:tjsj}) for $s_j$ in terms of $s_J$ yields (\ref{eqn:sj1}),
which ensures that $s_j>0$ whenever $s_J>0$.  From
\begin{equation}
  \alpha_0 = c+2N, \qquad
  c = -c_1 + \frac{\nu_1}{k_1}\omega, \qquad c_1 = k_1 + 2s_1, \qquad
  \omega = \tau_J+\gamma_Js_J,
\end{equation}
we obtain the formulas in (\ref{eqn:a0:w}) for $\alpha_0$ and
$\omega$.

Finally, we drop the assumption that the inequalities in
(\ref{eqn:qp}) are strict and observe what happens to these solutions
when $s_1=0$ or $s_J=0$.  If $s_1=0$, then $b_1=0$, so
\begin{equation*}
  1\not\in S \; \Rightarrow \; C_S = 0, \qquad
  1\in S \; \Rightarrow \; C_S =
  \Biggl(\prod_{(j,m)\in (S\setminus\{1\})\times S'} a_{jm}
  \Biggr)\Biggl(\prod_{m\in S'} a_{1m}b_m\Biggr).
\end{equation*}
But since $q_1=0$ and $p_1=k_1$ when $s_1=0$, we have
\begin{equation}
  a_{1m}b_m = \sqrt{\frac{(p_m-q_1)(q_m-p_1)}{(q_m-q_1)(p_m-p_1)}}
  \sqrt{\frac{q_m}{p_m}}e^{-ik_mx_{m0}} =
  \sqrt{\frac{q_m - k_1}{p_m - k_1}}e^{-ik_mx_{m0}}.
\end{equation}
Thus, if we define $\wtil{M}=M-1$ and shift indices down as in
(\ref{eqn:s1eq0}), the parameters $q_j$ and $p_j$ will decrease
by $k_1$ as illustrated here,
\begin{equation}
  \xymatrix@-1.7pc{
    0 \ar@/^1pc/[rr]^{s_1=0}
    & = & q_1 \ar@/^1pc/[rr]^{k_1}
    & < & p_1 \ar@/^1pc/[rr]^{s_2}
    & < & q_2 \ar@/^1pc/[rr]^{k_2}
    & < & p_2 \ar@/^1pc/[rr]^{s_3}
    & \le & q_3 \ar@/^1pc/[rr]^{k_3}
    & < & \cdots \ar@/^1pc/[rr]^{s_M}
    & \le & q_M \ar@/^1pc/[rr]^{k_M}
    & < & p_M \\
    & & & &
    0 \ar@/_1pc/[rr]_{\tilde{s}_1 = s_2}
    & \le & \tilde{q}_1 \ar@/_1pc/[rr]_{\tilde{k}_1 = k_2}
    & < & \tilde{p}_1 \ar@/_1pc/[rr]_{\tilde{s}_2 = s_3}
    & \le & \tilde{q}_2 \ar@/_1pc/[rr]_{\tilde{k}_2 = k_3}
    & < & \cdots \ar@/_1pc/[rr]_{\tilde{s}_{M-1} = s_M}
    & \le & \tilde{q}_{M-1} \ar@/_1pc/[rr]_{\tilde{k}_{M-1}=k_M}
    & < & \tilde{p}_{M-1}
  }
\end{equation}
and hence $\tilde{a}_{jm}=a_{j+1,m+1}$ and $\tilde{b}_{m}=
a_{1,m+1}b_{m+1}$ for $1\le j,m\le\wtil{M}$, $j\ne m$.  Therefore,
with the notation $\wtil{S}'=\wtil{M}\setminus\wtil{S}$,
the only difference between
\begin{equation*}
  P(z,\lambda) = \sum_{S\in\mc{P}(\mc{M}),1\in S} C_S
  \lambda^{\nu_{S'}}z^{k_S}, \qquad\quad
  \wtil{P}(z,\lambda) = \sum_{\wtil{S}\in\mc{P}(\wtil{\mc{M}})}
  \wtil{C}_{\wtil{S}}\lambda^{\tilde{\nu}_{\wtil{S}'}}z^{
    \tilde{k}_{\wtil{S}}},
\end{equation*}
is that each term in the former sum carries an extra factor
of $z^{k_1}$ when the sets $S$ and $\wtil{S}$ are matched up
in the natural way.  Thus $P(z,\lambda)=z^{k_1}\wtil{P}(z,\lambda)$.

Now consider the case $s_J=0$ with $J=\opn{argmax}_{2\le j\le
  M}\tau_j$.  This time $a_{J-1,J}=0$ due to $q_J=p_{J-1}$,
so $C_S=0$ unless $J-1$ and $J$ are both in $S$ or both in
$S'$.  Let us define
\begin{equation}
  \mc{P}_J(\mc{M}) = \{S\in\mc{P}(\mc{M})\;:\; J-1,J\in S
  \text{\; or\; }
  J-1,J\in S'\}.
\end{equation}
When we perform the sum over $S\in\mc{P}_J(\mc{M})$ to construct
$P(z,\lambda)$, we can consider $J-1$ and $J$ as a single unit.
The following factors always appear together in any $C_S$ that
contains one of them:
\begin{equation*}
  a_{J-1,m}a_{Jm} = \sqrt{\frac{(p_m-q_{J-1})(q_m - p_J)}{
    (q_m - q_{J-1})(p_m - p_J)}}, \qquad
  b_{J-1}b_J =
  \sqrt{\frac{q_{J-1}}{p_J}}e^{-ik_{J-1}x_{J-1,0}}e^{-ik_Jx_{J0}}.
\end{equation*}
Thus we can remove $p_{J-1}$ and $q_J$ from the sequence if
we short circuit the diagram
\begin{equation*}
  \xymatrix@C-2pc@R-1.5pc{
    0 \ar@/^1pc/[rr]^{s_1}
    & \le & q_1 \ar@/^1pc/[rr]^{k_1}
    & < & p_1 \ar@/^1pc/[rr]^{s_2}
    & \le & \cdots \ar@/^1pc/[rr]^{s_{J-1}}
    & \le & q_{J-1} \ar@/^1pc/[rr]^{k_{J-1}}
    & < & p_{J-1} \ar@/^1pc/[rr]^{s_J=0}
    & = & q_J \ar@/^1pc/[rr]^{k_J}
    & < & p_J \ar@/^1pc/[rr]^{s_{J+1}}
    & \le & q_{J+1} \ar@/^1pc/[rr]^{k_{J+1}}
    & < & \cdots \ar@/^1pc/[rr]^{s_M}
    & \le & q_M \ar@/^1pc/[rr]^{k_M}
    & < & p_M \\
    0 \ar@/_1pc/[rr]_{s_1}
    & \le & \tilde{q}_1 \ar@/_1pc/[rr]_{k_1}
    & < & \tilde{p}_1 \ar@/_1pc/[rr]_{s_2}
    & \le & \cdots \ar@/_1pc/[rr]_{s_{J-1}}
    & \le & \tilde{q}_{J-1} \ar@/_1pc/[rrrrrr]_{
      \tilde{k}_{J-1}=k_{J-1}+k_J}
    & & & < & & &
    \tilde{p}_{J-1} \ar@/_1pc/[rr]_{\tilde{s}_J=s_{J+1}}
    & \le & \tilde{q}_J \ar@/_1pc/[rr]_{\tilde{k}_J}
    & < & \cdots \ar@/_1pc/[rr]_{\tilde{s}_{M-1}}
    & \le & \tilde{q}_{M-1} \ar@/_1pc/[rr]_{\tilde{k}_{M-1}}
    & < & \tilde{p}_{M-1}
  }
\end{equation*}
and set
\begin{equation}
  \tilde{k}_{J-1}=k_{J-1}+k_J, \qquad
  \tilde{\nu}_{J-1}=\nu_{J-1}+\nu_J, \qquad
  \tilde{x}_{{J-1},0} = \frac{k_{J-1}x_{J-1,0} + k_Jx_{J0}}{k_{J-1}+k_J}.
\end{equation}
The other parameters are simply copied from the original sequence
as was indicated in (\ref{eqn:sJeq0}).  We then have
\begin{equation*}
  P(z,\lambda) = \sum_{S\in\mc{P}_J(\mc{M})} C_S
  \lambda^{\nu_{S'}}z^{k_S} =
  \sum_{\wtil{S}\in\mc{P}(\wtil{\mc{M}})}
  \wtil{C}_{\wtil{S}}\lambda^{\tilde{\nu}_{\wtil{S}'}}z^{
    \tilde{k}_{\wtil{S}}} = \wtil{P}(z,\lambda),
\end{equation*}
where again the sets $S$ and $\wtil{S}$ in these sums are in
natural 1-1 correspondence.

Finally, we verify that the parameters $\tilde{k}_j$, $\tilde{s}_j$
and $\tilde{\nu}_j$ of the reduced system are consistent
with the construction, i.e.~if $\wtil{M}\ge2$ and
we define $\tilde{J}=\opn{argmax}_{2\le
j\le\wtil{M}}\tilde{\tau}_j$, then
\begin{equation}
  \tilde{s}_j = \frac{\tilde{\tau}_{\tilde{J}}-\tilde{\tau}_j}{
    \tilde{\gamma_j}} + \frac{\tilde{\gamma}_{\tilde{J}}}{\tilde{\gamma}_j}
  \tilde{s}_{\tilde{J}}, \qquad
  (2\le j\le\wtil{M}, \;\;j\ne\tilde{J}).
\end{equation}
Recall that these equations are obtained by eliminating $\tilde{c}$
and $\tilde{\omega}$ from
\begin{equation} \label{eqn:til:kjc}
  \tilde{k}_j ( \tilde{c} + \tilde{c}_j ) = \tilde{\nu}_j\tilde{\omega},
  \qquad (1\le j\le \wtil{M}).
\end{equation}
In the first case where $s_1=0$, (\ref{eqn:s1eq0}) and (\ref{eqn:kj:c})
together with
\begin{equation}
  \tilde{c}_j = \tilde{p}_j + \tilde{q}_j = c_{j+1} - 2k_1, \qquad
  (1\le j\le\wtil{M})
\end{equation}
imply that (\ref{eqn:til:kjc}) is satisfied if we define
\begin{equation}
  \tilde{c} = c + 2k_1, \qquad \tilde{\omega}=\omega.
\end{equation}
Since we annihilate $k_1$ particles in this case, $\wtil{N}=N-k_1$,
and the mean
\begin{equation}
  \tilde{\alpha}_0 = \tilde{c}+2\wtil{N} = c+2N = \alpha_0
\end{equation}
remains unchanged in spite of the change in $c$ and $N$ (as it must
for the solution to vary continuously through the bifurcation).
In the remaining case where $s_J=0$, we have
\begin{equation}
  \tilde{c}_j = \begin{cases}
    c_j, & j<J-1, \\
    c_j+k_J=c_J-k_j,\;\; & j=J-1, \\
    c_{j+1}, & J\le j\le\wtil{M}.
  \end{cases}
\end{equation}
Thus, by (\ref{eqn:sJeq0}) and (\ref{eqn:kj:c}), all the equations in
(\ref{eqn:til:kjc}), except possibly $j=J-1$, are trivially satisfied if
we define
\begin{equation}
  \tilde{c}=c, \quad\qquad \tilde{\omega}=\omega.
\end{equation}
The remaining equation is
\begin{equation} \label{eqn:kjc2}
  (k_{J-1} + k_J)(c+c_{J-1}+k_J)=(\nu_{J-1} + \nu_J)\omega.
\end{equation}
To see that this is true, note that by (\ref{eqn:kj:c}),
\begin{equation}
  k_{J-1}(c+c_{J-1})=\nu_{J-1}\omega, \quad\qquad
  k_J(c+c_J)=\nu_J\omega.
\end{equation}
Adding these equations and using $k_Jc_J=k_J(c_{J-1}+k_{J-1}+k_J)$
gives (\ref{eqn:kjc2}), as required.  Since $\tilde{c}=c$
and $\wtil{N}=N$, the mean $\alpha_0=c+2N$ does not change as a result
of the bifurcation.

Thus, we have shown that when $s_1=0$ or $s_J=0$, the function
$u(x,t)$ in (\ref{eqn:u:a0:beta}) agrees with another function in the
hierarchy with $M$ reduced by one and the parameter $s_1$ or $s_J$
removed.  Continuing in this fashion, we can remove all the zero
indices, eventually yielding the case where the $p_j$ and $q_j$ are
distinct from one another, or the case that $M\in\{0,1\}$.  This shows
that $u(x,t)$ is a solution of (\ref{eqn:BO}), where we rely on
Theorem~\ref{thm1} to handle the reduction from $M=2$ to the traveling
wave case $M=1$, or the constant solution case $M=0$.
\end{proof}

\section{Examples} \label{sec:examples}
In this section we present several examples to illustrate the
types of bifurcation that occur in the hierarchy of time-periodic
solutions described in Theorem~\ref{thm2}.  We begin with the
simplest example that leads to a degenerate bifurcation, namely
\begin{equation}
  M=3, \qquad \vec{k}=(1,1,1), \qquad \vec{\nu}=(-2,-1,0).
\end{equation}
This solution and the three $M=2$ solutions connected to it have
parameters shown in Figure~\ref{fig:params}.  We hold the mean
$\alpha_0=0.544375$ fixed, which is the value used in several of the
numerical simulations in \cite{benj1,benj2}, and construct a single
bifurcation diagram showing all four solutions; see
Figure~\ref{fig:ripples}.  Note that paths B,C,D are actually 
parametrized by
\begin{equation}
  \tilde{s}_1^{(B)} = s_2, \quad
  \tilde{s}_2^{(B)} = s_3, \qquad
  \tilde{s}_1^{(C)} = s_1, \quad
  \tilde{s}_2^{(C)} = s_3, \qquad
  \tilde{s}_1^{(D)} = s_1, \quad
  \tilde{s}_2^{(D)} = s_2
\end{equation}
in formulas (\ref{eqn:mj:def})--(\ref{eqn:P:CS}), but we use the
original variables $s_j$ here for the convenience of making a single
bifurcation diagram.  The one confusing aspect of doing this is that
we obtain different traveling waves depending on the order in which
we set the $s_j$ to zero.  This is why we drew two axes for $s_2$ and
$s_3$ in Figure~\ref{fig:ripples}.  For example, if we start on path A
and decrease $s_1$ to zero (moving to path B) and then decrease $s_3$
to zero, we obtain the traveling wave bifurcation $(2,-1,1,1)$;
however, if we first set $s_3=0$ (moving to path~D) and then set
$s_1=0$, we obtain the bifurcation $(2,-1,2,3)$.  Both traveling waves
have $N=2$ humps and speed index $\nu=-1$, but the amplitude and
period of the two solutions are different as they have different
bifurcation indices.  Similarly, although starting on path A and
setting $s_2$ and $s_3$ to zero in either order leads to the same
stationary solution, the period $T$ is different depending on whether
we follow path B to $(1,0,1,1)$ or path C to $(1,0,2,3)$.  This only
happens at the bottom ($\wtil{M}=1$) level of the hierarchy, where
$\tilde{k}_1$, $\tilde{\nu}_1$ and $\tilde{s}_1$ are not sufficient to
uniquely determine the traveling wave; if we start with $M\ge4$ and
follow two paths down several levels to $\wtil{M}\ge2$ with the same
parameters $\tilde{k}_j$, $\tilde{\nu}_j$ and $\tilde{s}_j$, the
resulting solution is independent of the path.

\begin{figure}[t]
\fbox{\parbox{5.9in}{
\begin{alignat*}{8}
  &&\underline{\text{path A}}\;\;&& 
  &&\underline{\text{path B}}\;\;&& 
  &&\underline{\text{path C}}\;\;&& 
  &&\underline{\text{path D}}\;\; \\
  k&=& 1, 1,1,& &   k&=& 1,1,& &   k&=& 2,1,& &   k&=& 1, 2, \\
  \nu&=&\,-2,-1,0,& & \nu&=&-1,0,& & \nu&=&-3,0,& &  \nu&=&\,-2,-1, \\
    m&=&    1,1,& &   m&=&   1,& &   m&=&   3,& &   m&=&    3, \\
 \tau&=&    2,2,& &\tau&=&   2,& &\tau&=&   2,& &\tau&=&    2, \\
 \gamma&=&2,2,&&\gamma&=&2,&&\gamma&=&4/3,&&\gamma&=&4/3,\\
 s_3&=& s_2,& &s_1&=& 0,& &s_2&=&0,& &s_3&=&0, \\
 \alpha_0&=&1-2s_1-&4s_2,\quad&
 \alpha_0&=&\,1-2s_2&-2s_3,\quad&
 \alpha_0&=&\,1-2s_1&-2s_3,\quad&
 \alpha_0&=&1-2s_1&-\frac{8}{3}s_2, \\
 \omega&=&2+2s_2,&&
 \omega&=&2+2s_3,&&
 \omega&=&\,2+\frac{4}{3}s_3,&&
 \omega&=&\,2+\frac{4}{3}s_2.&
\end{alignat*}}}
\caption{Parameters of four paths of time-periodic solutions connected
  by bifurcations.}\label{fig:params}
\end{figure}
\begin{figure}[p]
\begin{center}
\mypsdraft
\includegraphics[width=.44\linewidth,trim=0 0 0 0,clip]{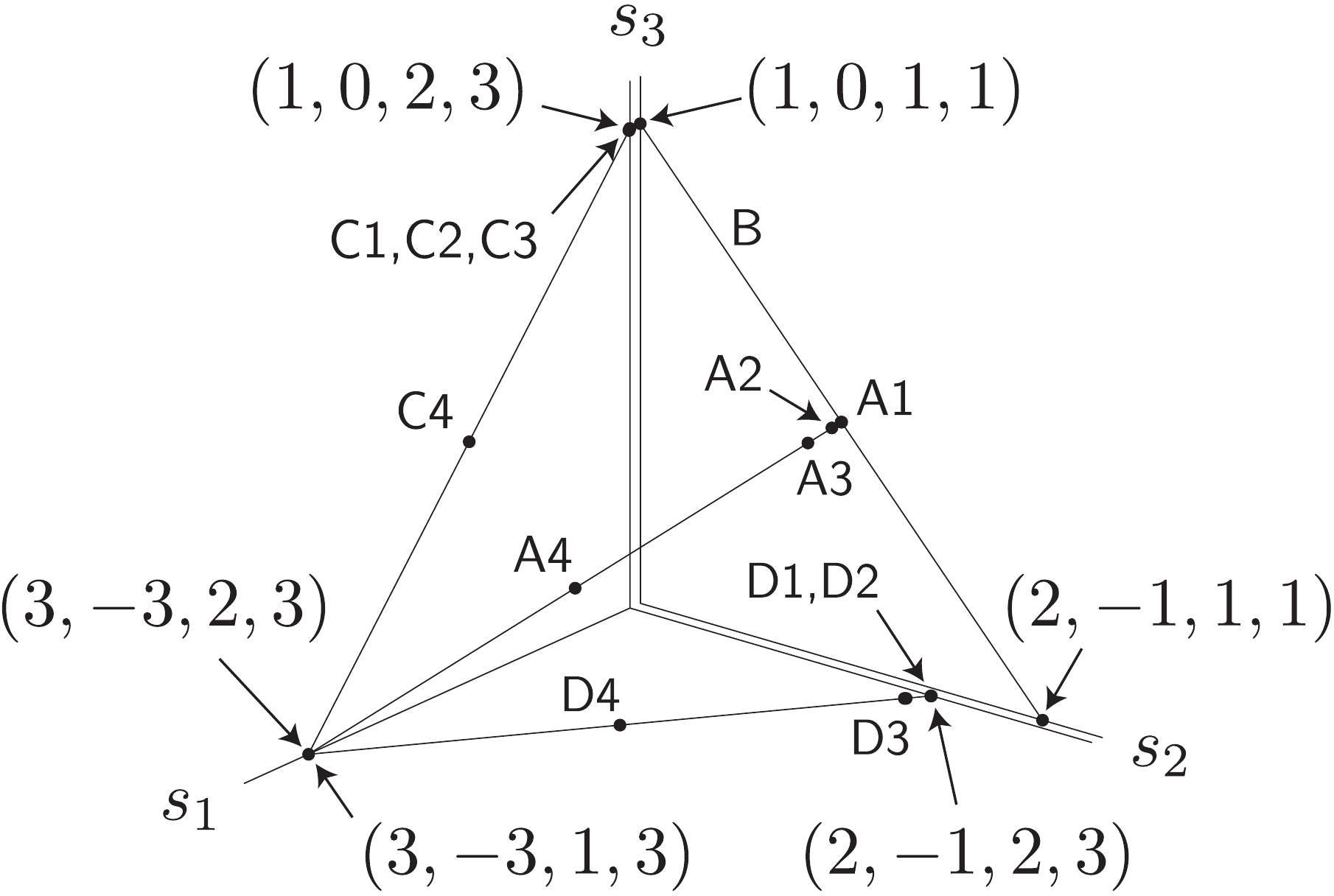}\qquad
\includegraphics[width=.5\linewidth,trim=0 0 0 0,clip]{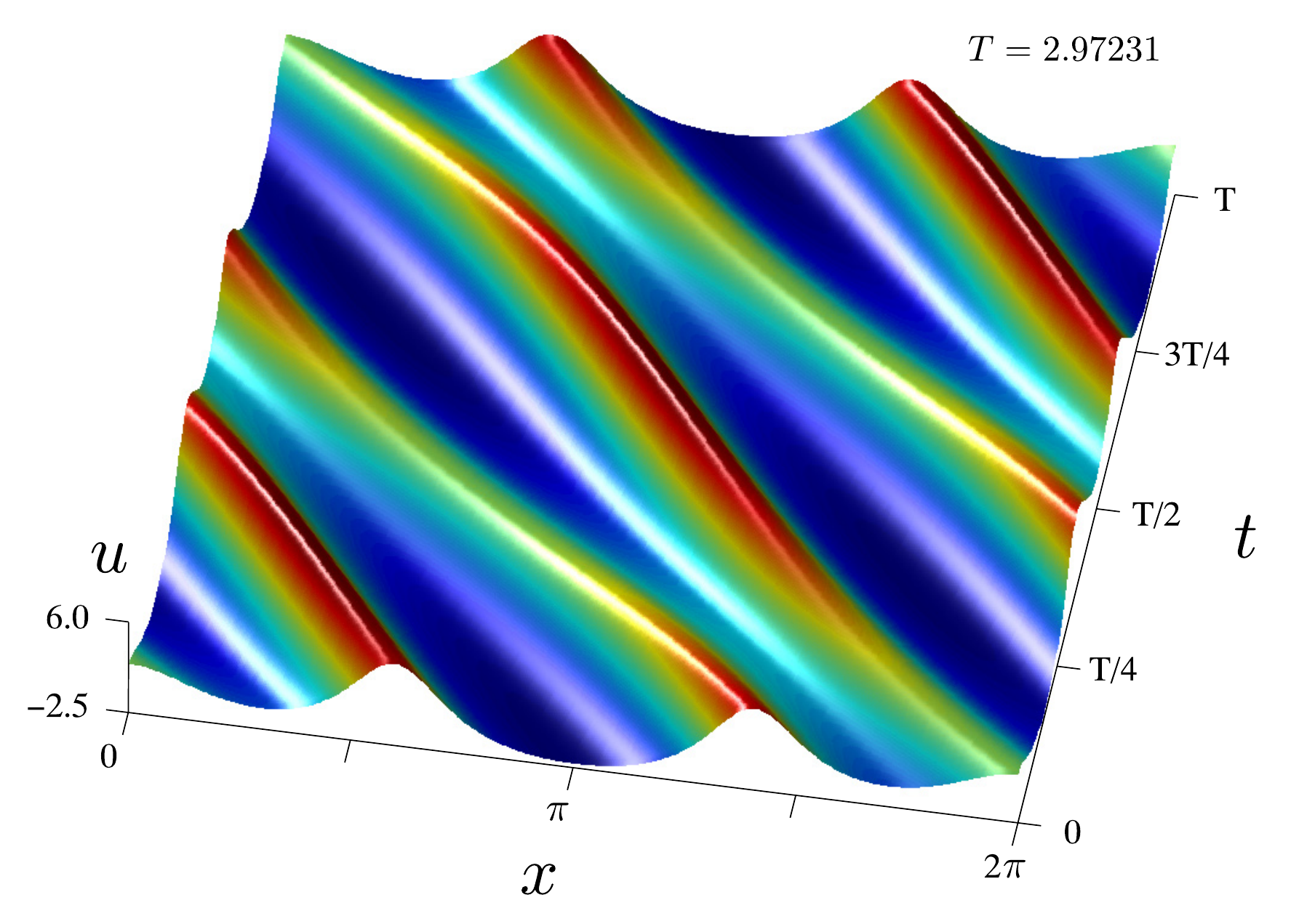}
\mypsfull
\end{center}
\caption{\emph{Left:} degenerate bifurcation from $M=1$ to
$M=2$ (paths C and D) and $M=3$ (path A) with $\alpha_0$ held fixed.
\emph{Right:} three dimensional plot of the solution labeled D4.
}
\label{fig:ripples}
\end{figure}
\begin{figure}[p]
\begin{center}
\mypsdraft
\includegraphics[width=\linewidth,trim=0 0 0 0]{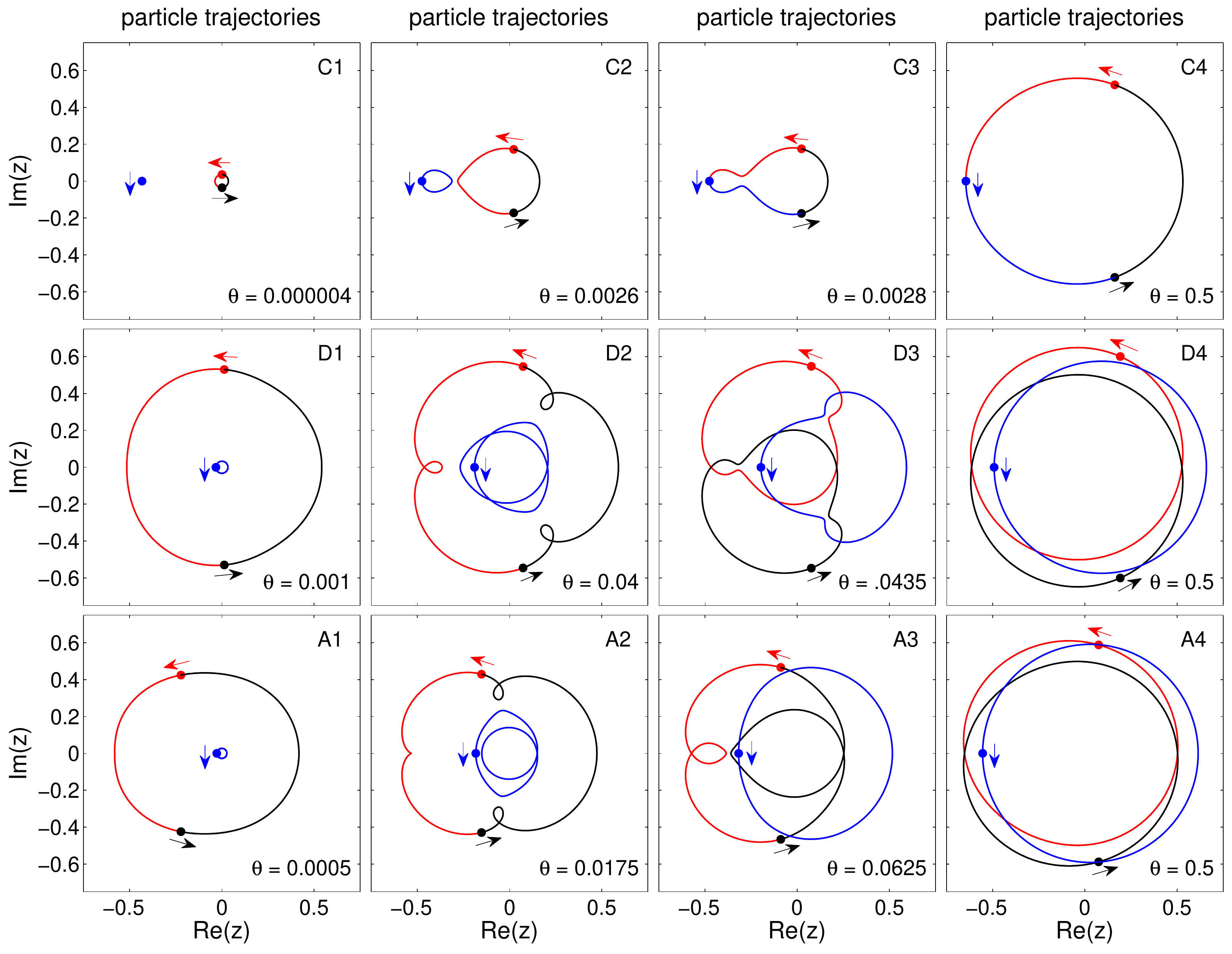}
\mypsfull
\end{center}
\caption{Particle trajectories along paths C,D,A in the bifurcation
diagram of Figure~\ref{fig:ripples}.
}\label{fig:beta3}
\end{figure}

In Figure~\ref{fig:beta3}, we plot the particle trajectories of
several solutions on paths C, D and A in the bifurcation diagram of
Figure~\ref{fig:ripples}.  We parametrize each path linearly by a
variable $\theta\in[0,1]$.  For example, on path A,
\begin{equation}
  s_1 = \frac{1-\alpha_0}{2}\theta, \qquad
  s_2 = s_3 = \frac{1-\alpha_0}{4}(1-\theta), \qquad 0\le\theta\le 1.
\end{equation}
Path C connects the one-hump stationary solution to the three-hump
traveling wave.  When $\theta=4.0\times 10^{-6}$, two particles have
nucleated at the origin and execute small, nearly circular orbits
around each other while the third particle travels around its original
resting position.  As $\theta$ increases, the orbits deform and
coalesce into a single path, as shown in the middle two panels of this
row.  At the critical value $\theta=0.002649485$, the particles
collide at $t=T/6$, $t=3T/6$ and $t=5T/6$, so the solution of the ODE
(\ref{eqn:beta:ode}) ceases to exist for all time;
nevertheless, $u(x,t)$ in (\ref{eqn:u:a0:beta}) remains smooth and
satisfies (\ref{eqn:BO}) for all $t$.  As $\theta$ increases to 1, the
common trajectory of the three particles becomes more and more
circular until the traveling wave is reached, where it is exactly
circular.  The solutions on this path are reducible in the sense that
their natural period is $1/3$ of the period $T$ used here.  However,
in order to bifurcate to paths A and D, we have to use this solution
rather than the reduced solution.

Path D connects the two-hump traveling wave with speed index $\nu=-1$
to the three-hump traveling wave with speed index $\nu=-3$.  When we
bifurcate from the two-hump traveling wave, a new particle nucleates
at the origin and grows in amplitude until its trajectory joins up
with the orbits of the outer particles.  As $\theta$ increases
further, the three orbits become nearly circular and eventually
coalesce into a single circular orbit at the three-hump traveling
wave.  Note that the particles on path D follow different
trajectories, which leads to a ``braided'' effect in the peaks and
troughs of the solution $u(x,t)$ shown in Figure~\ref{fig:ripples};
unlike path C, these solutions are not reducible to a shorter period.

Path A connects an interior bifurcation on path B to this same
three-hump traveling wave.  When $\theta=0.0005$, a particle has
nucleated at the origin without destroying the periodicity of the
orbit of the other two particles.  This path involves two topological
changes in the particle trajectories, as shown in the middle two
panels of the bottom row of Figure~\ref{fig:beta3}.  
As
$\theta\rightarrow1$, this solution also approaches the three-hump
traveling wave.  This is interesting because, up to a phase shift in
space and time, the linearized Benjamin-Ono equation about this
traveling wave \cite{benj2} has only two linearly independent,
time-periodic solutions corresponding to the bifurcations $(3,-3,1,3)$
and $(3,-3,2,3)$; this degenerate bifurcation is not predicted by
linear theory.

\begin{figure}[p]
\begin{center}
\mypsdraft
\includegraphics[width=.65\linewidth,trim=0 0 0 0,clip]{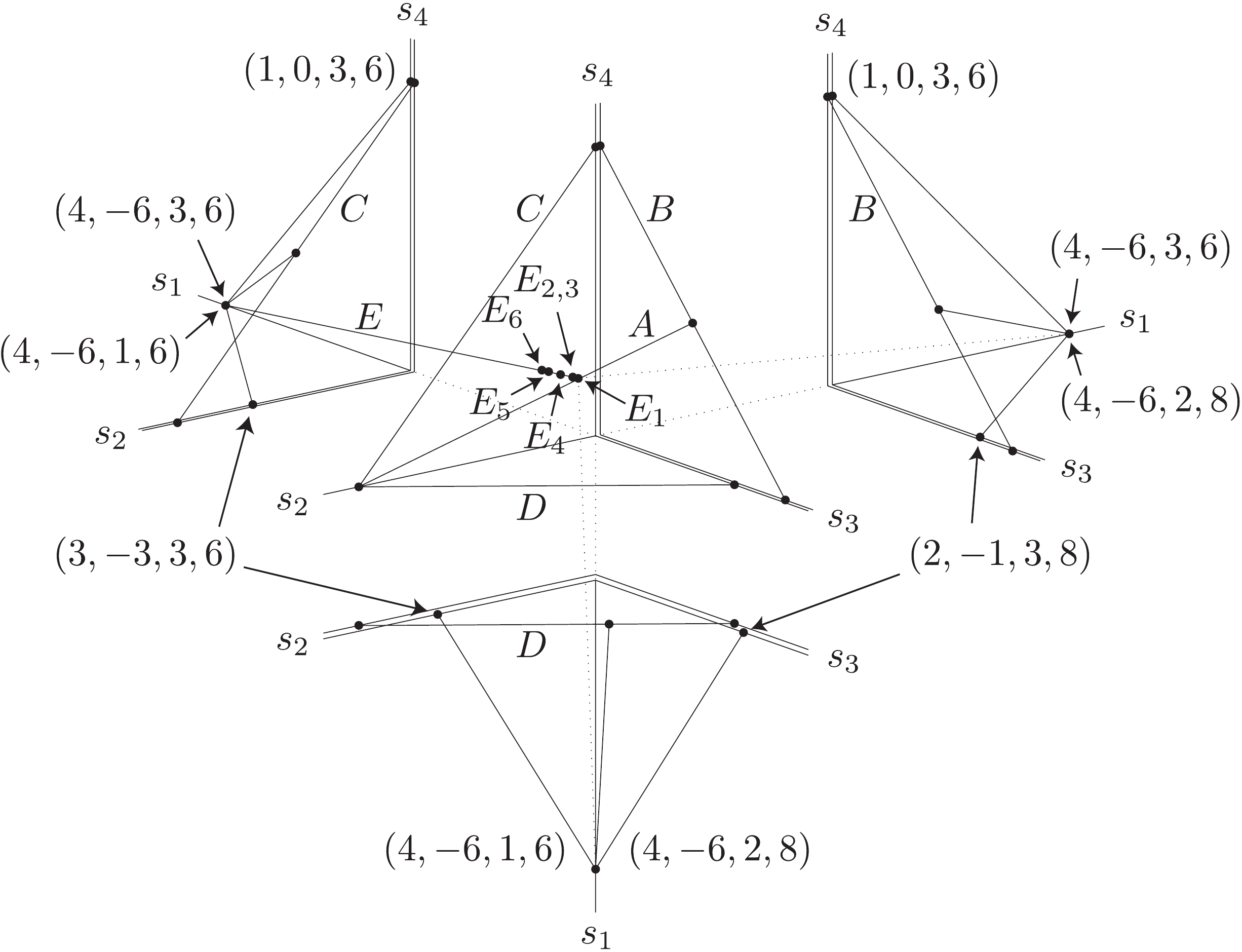}
\mypsfull
\end{center}
\caption{Degenerate bifurcation from $M=1$ to
$M=2,3,4$ with $\alpha_0$ held fixed.
}
\label{fig:bifurE}
\end{figure}
\begin{figure}[p]
\begin{center}
\mypsdraft
\includegraphics[width=\linewidth,trim=0 0 0 0]{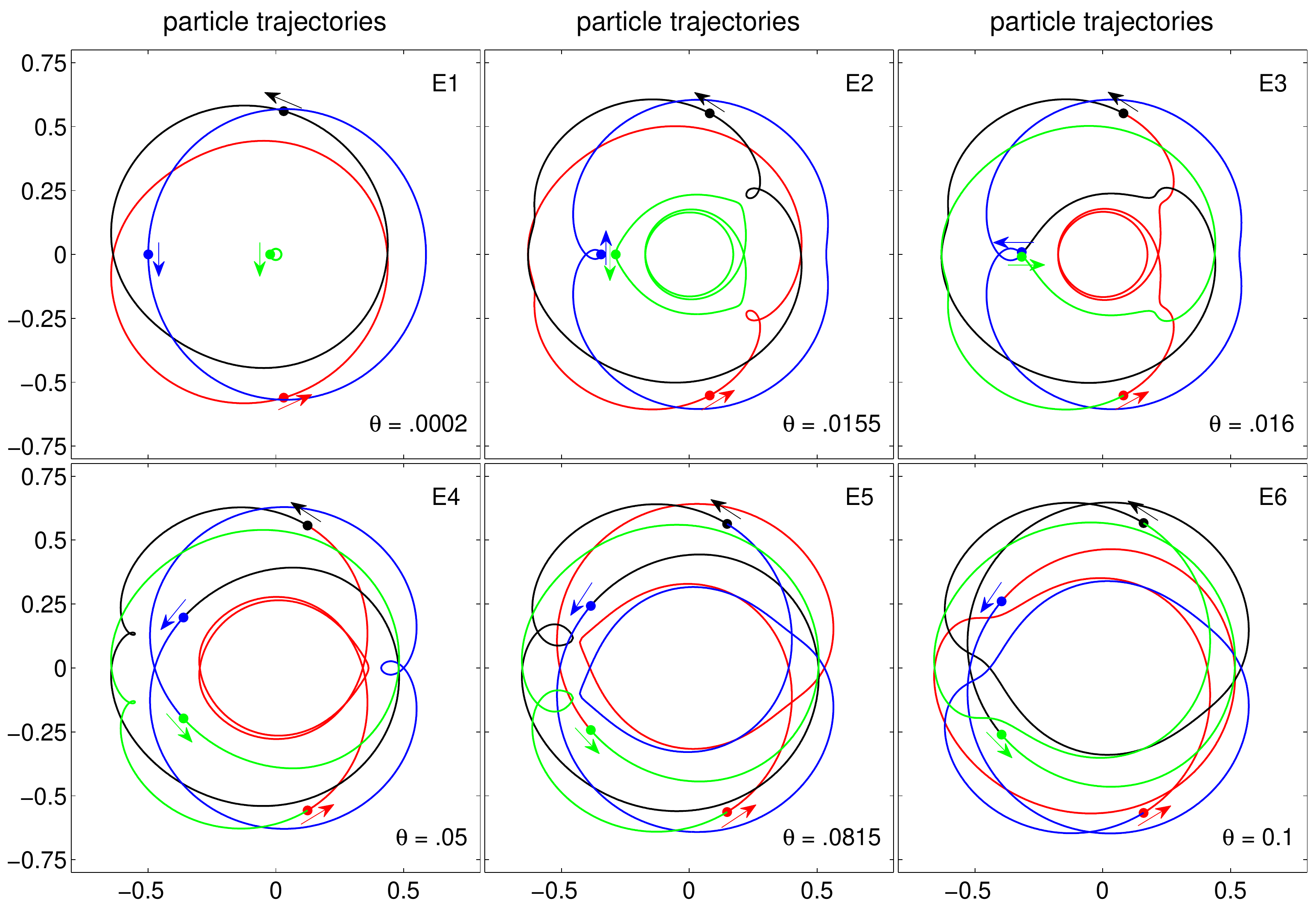}
\mypsfull
\end{center}
\caption{Particle trajectories along path E in the bifurcation
diagram of Figure~\ref{fig:bifurE}.
}\label{fig:particleE}
\end{figure}

In a similar way, we can construct a bifurcation from a traveling
wave to an arbitrary level of the hierarchy by taking
\begin{equation}
  M \text{ arbitrary}, \qquad \vec k=(1,1,\dots,1), \qquad
  \nu=(-M,-M+1,\dots,-2,-1,0).
\end{equation}
We find that $m_j=1$, $\tau_j=2$ and $\gamma_j=2$ for $2\le j\le M$,
so any subset of the indices $J=2,\dots,M$ can be removed to obtain a
solution at a lower level of the hierarchy.  If all the indices are
removed, we obtain a traveling wave with $2^{M-1}-1$ bifurcations to
higher levels of the hierarchy, but only $M-1$ of them (to the second
level) are predicted by linear theory.  The case $M=4$ is depicted in
the bifurcation diagram of Figure~\ref{fig:bifurE}, where each
tetrahedron contains paths with one of the $s_j$ set to zero, and the
outermost point of the outer three tetrahedra corresponds to one and
the same traveling wave.  Linear theory predicts the bifurcations
$(4,-6,1,6)$, $(4,-6,2,8)$, $(4,-6,3,6)$ from this traveling wave to
the $M=2$ level of the hierarchy, but does not predict the three $M=3$
families of solutions that connect this traveling wave to interior
bifurcations on paths B, C and D, nor the $M=4$ solution connecting
this traveling wave to path A from the previous example.  In
Figure~\ref{fig:particleE}, we show six solutions on this $M=4$ path
labeled E in the bifurcation diagram.  When $\theta=0^+$, (where path
E meets path A), a fourth particle nucleates at the origin without
destroying periodicity of the other three.  As $\theta$ increases to 1,
the trajectories of the particles $\beta_j(t)$ undergo several
topological changes (that determine which particles exchange positions
over one period) until the trajectories coalesce into a single
circular orbit at the degenerate traveling wave (with each particle
moving counter-clockwise one and a half times per period).

Finally, in Figure~\ref{fig:bifurG}, we show a path of solutions
at level $M=3$ that connects two $M=2$ solutions by interior
(non-degenerate) bifurcations.  The parameters of this path
are
\begin{equation}
  M=3, \qquad \vec k=(1,1,1), \qquad \vec \nu = (-2,0,1).
\end{equation}
We parametrize this path by
\begin{equation} \label{eqn:s1s3:G}
  s_1 = \frac{\theta}{2}, \qquad s_2 = \frac{3-\theta}{2}, \qquad
  s_3 = \frac{1-\theta}{4}, \quad\qquad (0\le \theta\le 1).
\end{equation}
When $\theta=0$, we obtain the $M=2$ family of solutions that
bifurcates from a two-hump, right traveling wave with indices
$(2,1,1,1)$.  For this family of solutions, the mean is related to the
parameters $\tilde{s}_1=s_2$ and $\tilde{s}_2=s_3$ via
$\alpha_0=3-2s_2$; hence, holding the mean fixed requires that $s_2$
remains constant.  We will not reach the one-hump, right traveling wave
at the bifurcation $(1,1,1,1)$ unless we increase $\alpha_0$ to 3.  As
we increase $\theta$ in (\ref{eqn:s1s3:G}), the trajectory of the
particle that nucleates at the origin at $\theta=0^+$ grows and merges
with the trajectories of the original two particles through three
topological changes: one at F1, one not shown, and one at F3.
Eventually, when $\theta=1$, path F joins path G connecting
$(2,1,2,5)$ to $(3,-1,1,5)$.  Since the two-hump traveling wave moves
to the right while the three-hump traveling wave moves to the left,
the solution at F3 involves particles moving clockwise for part of
their orbit and counter-clockwise at other times, leading to an
interesting three-particle trajectory with 5-fold symmetry.

\begin{figure}[t]
\begin{center}
\mypsdraft
\includegraphics[width=.35\linewidth,trim=0 0 0 0,clip]{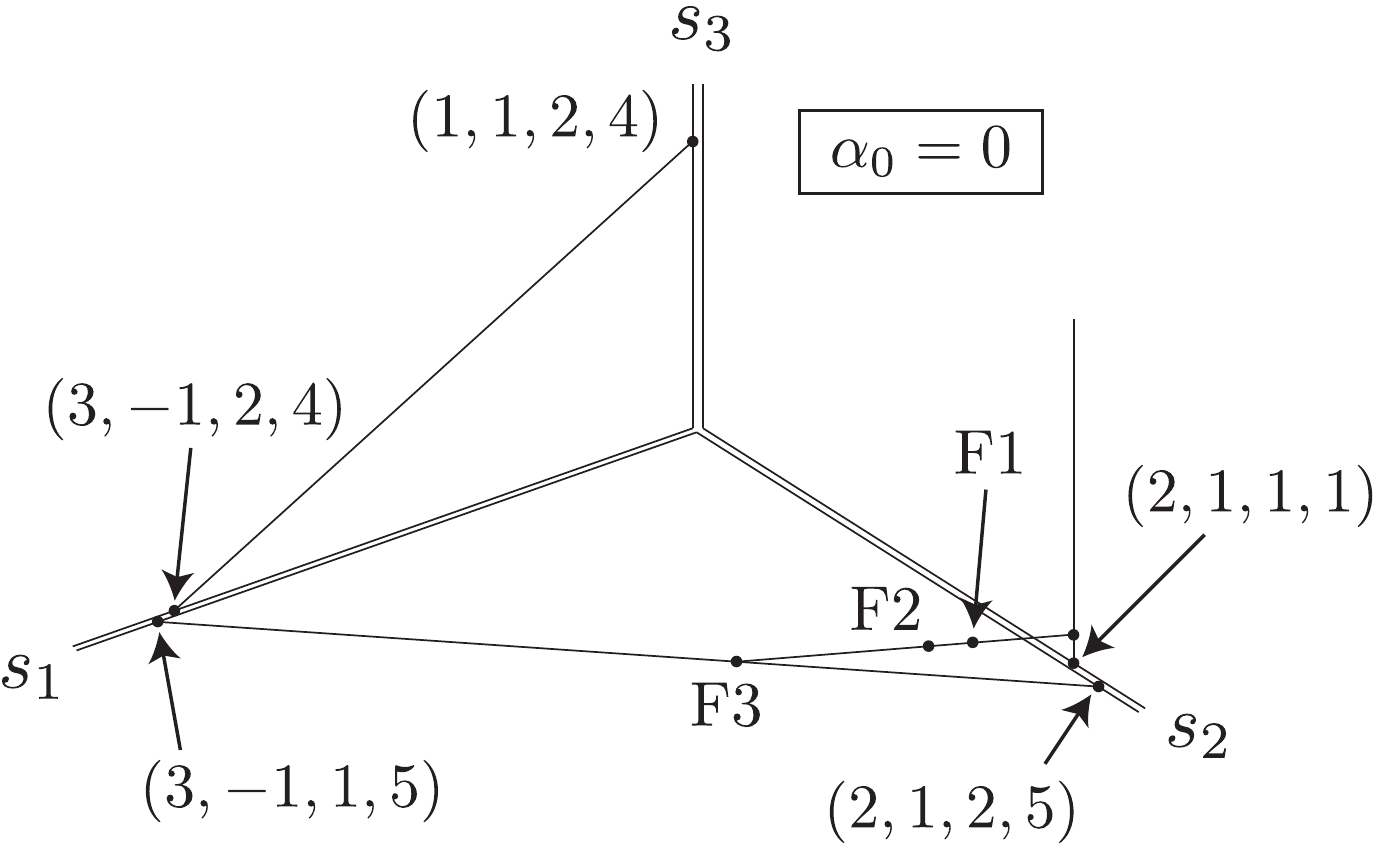}\quad
\includegraphics[width=.6\linewidth,trim=0 0 0 0,clip]{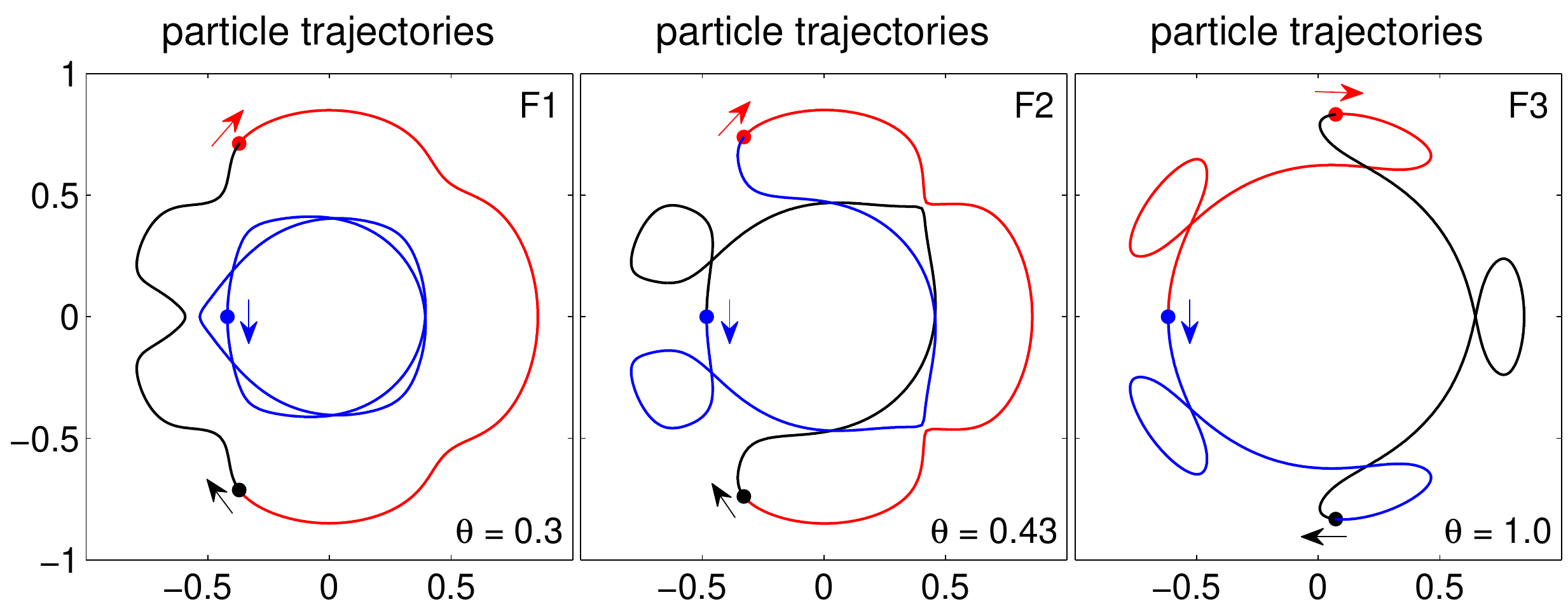}
\mypsfull
\end{center}
\caption{\emph{Left:} Bifurcation diagram showing a path of time-periodic
solutions at level $M=3$ connecting two $M=2$ solutions. \emph{Right:}
Three solutions on this path.
}
\label{fig:bifurG}
\end{figure}

\appendix

\section{Zeros of the polynomial $P(\cdot,\lambda)$}
\label{appendix:roots}

As mentioned in the proof of Theorem~\ref{thm2}, there is a small gap
in the paper \cite{satsuma:ishimori:79} showing that the multiperiodic
solutions (\ref{eqn:u:satsu}) satisfy the Benjamin-Ono equation, for
the bilinear formalism used to derive these solutions requires that
the zeros of $f$ (or $f'$) in (\ref{eqn:f:satsu1}) and
(\ref{eqn:f:satsu2}) lie in the upper (or lower) half of the complex
plane.  In this appendix, we prove the equivalent assertion (in the
case that all the $k_i$ are integers) that the zeros of the polynomial
$P(\cdot,\lambda)$ in (\ref{eqn:P:CS}) lie inside the unit disk.  The
two key ideas of this proof, namely showing that $P$ (or $f$) has a
representation as a determinant, and that the matrix is non-singular
for $|z|\ge1$,
are essentially due to Matsuno \cite{matsuno:04} and
Dobrokhotov/Krichever \cite{dobro:91}, respectively.

\begin{theorem}\label{thm3}
Suppose $M\ge2$,\, $k_1,\dots,k_M\in\mathbb{N}$,\,
$\nu_1,\dots,\nu_M\in\mathbb{R}$,\, $x_{10},\dots,x_{M0}\in\mathbb{R}$,
and
\begin{equation}
  0 < q_1 < p_1 < q_2 < p_2 < \cdots < q_M < p_M.
\end{equation}
Let $\mc{M}=\{1,\dots,M\}$ and define the polynomial
\begin{equation}
  P(z,\lambda) = \sum_{S\in\mc{P}(\mc{M})} C_S \lambda^{\nu_{S'}}z^{k_S},
  \qquad
  C_S = \Biggl( \prod_{(i,j) \in S\times S'} a_{ij}\Biggr)
  \Biggl( \prod_{j\in S'} b_j \Biggr),
\end{equation}
where $S'=\mc{M}\setminus S$, $k_S = \sum_{i\in S}k_i$,
$\nu_{S'} = \sum_{j\in S'} \nu_j$, and
\begin{equation} \label{eqn:ab:def2}
  a_{ij} = \sqrt{\frac{(p_j-q_i)(q_j-p_i)}{(q_j-q_i)(p_j-p_i)}}, \qquad
  b_j = \sqrt{\frac{q_j}{p_j}}e^{-ik_jx_{j0}}.
\end{equation}
Then all the zeros $\beta_l$ of $P(\cdot,\lambda)$ lie inside the
unit disk $\Delta\in\mathbb{C}$ provided $|\lambda|=1$.
\end{theorem}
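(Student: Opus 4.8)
The plan is to establish a determinantal representation of $P(z,\lambda)$ following Matsuno \cite{matsuno:04}, and then prove that the relevant matrix is nonsingular whenever $|z|\ge 1$ and $|\lambda|=1$, following the approach of Dobrokhotov and Krichever \cite{dobro:91}. The key observation is that the sum over subsets $S\in\mc{P}(\mc{M})$ defining $P(z,\lambda)$ has the combinatorial structure of a determinant expansion. Concretely, I expect to write $P(z,\lambda) = \det(I + K)$ (or a closely related expression $\prod_j(\cdots)\cdot\det(\cdots)$) where $K=K(z,\lambda)$ is an $M\times M$ matrix whose entries encode the data $a_{ij}$, $b_j$, $z^{k_j}$, $\lambda^{\nu_j}$, $q_j$ and $p_j$. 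The factors $a_{ij}$ with their square-root symmetric form in $(q_j-q_i)(p_j-p_i)$ versus $(p_j-q_i)(q_j-p_i)$ strongly suggest a Cauchy-type matrix, so the first step is to guess the correct rank-one or Cauchy structure by matching the $S=\emptyset$ and $S=\mc{M}$ terms, then verify by expanding $\det(I+K)$ that the subset sum is reproduced term by term.

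With the determinant representation in hand, the strategy is contrapositive: a zero $\beta$ of $P(\cdot,\lambda)$ with $|\beta|\ge 1$ would produce a nonzero null vector $v$ of the matrix $I+K(\beta,\lambda)$. First I would normalize and symmetrize, conjugating $K$ by a diagonal matrix so that it becomes Hermitian or so that the quadratic form $\langle v, (I+K)v\rangle$ becomes manifestly sign-definite. The goal is to show that for $|z|\ge1$ and $|\lambda|=1$ the Hermitian part of the matrix is positive definite (or that $I+K$ is diagonally dominant in a suitable weighted norm), which forces $v=0$ and contradicts the existence of the root outside the disk. The inequalities $0<q_1<p_1<q_2<\cdots<q_M<p_M$ guarantee that each $a_{ij}\in(0,1)$ and that the Cauchy kernel built from the $q$'s and $p$'s is positive, which is exactly the ingredient needed to control the off-diagonal terms against the diagonal.

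The main obstacle I anticipate is the positivity/nonsingularity estimate in the second step: verifying that the Hermitian part of $I+K$ is positive definite precisely on the regime $|z|\ge1$, $|\lambda|=1$ rather than only for $z$ on a smaller set. The boundary case $|z|=1$, $|\lambda|=1$ is the delicate one, since there the estimate must remain strict, and the phases $e^{-ik_jx_{j0}}$ and $\lambda^{\nu_j}$ enter nontrivially; I expect the proof to reduce to showing that a certain Cauchy or Gram matrix associated to the nodes $q_j,p_j$ is positive definite, which is where the ordering hypothesis does its real work. A clean way to organize this is to exhibit the matrix as a genuine Gram matrix $G_{ij}=\langle \psi_i,\psi_j\rangle$ for explicit functions $\psi_j$ (for instance rational functions or exponentials indexed by the nodes), so that positive definiteness is automatic and only the strictness (linear independence of the $\psi_j$) needs checking; linear independence then follows from the distinctness of the nodes. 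Once nonsingularity for $|z|\ge1$ is secured, the conclusion is immediate: $P(\cdot,\lambda)$ is a degree-$N$ polynomial with $N=k_\mc{M}$, all of whose $N$ roots must therefore lie in $\Delta$.
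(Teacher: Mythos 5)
Your first step is essentially identical to the paper's: the paper writes $P(z,\lambda)=\bigl[\prod_j(p_j-q_j)\bigr]\bigl[\prod_j b_j\lambda^{\nu_j}\bigr]\bigl[\prod_{i<j}a_{ij}^2\bigr]\det R(z,\lambda)$ with $R_{ij}=r_i\delta_{ij}+\frac{1}{p_i-q_j}$ and $r_i=\frac{b_i^{-1}\lambda^{-\nu_i}z^{k_i}}{p_i-q_i}\prod_{j\ne i}a_{ij}^{-1}$, and verifies it by expanding $\det R$ over subsets and evaluating the complementary minors of the Cauchy matrix $\{(p_i-q_j)^{-1}\}$ by the Cauchy determinant formula. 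So the ``diagonal plus Cauchy'' guess is right and that part would go through.

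The gap is in your nonsingularity step, and it is fatal as stated. Every criterion you propose (positive-definite Hermitian part after diagonal conjugation, weighted diagonal dominance, Gram structure) amounts to showing the quadratic form of some transform of $R$ is sign-definite, i.e.\ that $0$ avoids its numerical range; but the diagonal entries $r_i$ carry the completely free phases $e^{ik_ix_{i0}}\lambda^{-\nu_i}z^{k_i}$, and for admissible phases the spectrum of $R$ genuinely crosses into the left half-plane. Concretely, take $M=2$, $(q_1,p_1,q_2,p_2)=(1,2,3,4)$, $k_1=k_2=1$, $\nu_j=0$, $z=\lambda=1$; then $|r_1|=2\sqrt{2/3}\approx 1.63$, $|r_2|=4/3$, with arbitrary independent phases. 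Choosing $r_1\approx-1.63$, $r_2=+4/3$ gives $R\approx\bigl(\begin{smallmatrix}-0.63 & -1\\ 1/3 & 2.33\end{smallmatrix}\bigr)$, which is invertible ($\det R\approx-1.14$) but has \emph{real eigenvalues of opposite sign} ($\approx 2.22$ and $-0.52$). Eigenvalues are similarity invariants and the numerical range is convex and contains them, so every diagonal (indeed every) conjugate of $R$, and every unimodular multiple of such, has $0$ in its numerical range: there exist nonzero $v$ with $\langle v,\tilde{R}v\rangle=0$, so no definiteness argument can certify invertibility here, and $R$ is similar to no Gram matrix. Choosing instead $r_1\approx-1.63$, $r_2\approx-4/3$ gives eigenvalues $\approx-0.48\pm0.56i$ and also kills weighted diagonal dominance (the row conditions $0.633\,w_1>w_2$ and $\tfrac{1}{3}w_2>\tfrac{1}{3}w_1$ are incompatible). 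The only structure that survives the arbitrary phases is the modulus bound $d_i=|b_i^{-1}\lambda^{-\nu_i}z^{k_i}|>1$, and that is exactly what the paper's argument uses: if $R\gamma=0$, then $r_i\gamma_i=-\psi(p_i)$ with $\psi(k)=\sum_j\gamma_j/(k-q_j)$, and the rational function $\phi(k)=\psi(k)\overline{\psi}(k)\prod_j\frac{k-q_j}{k-p_j}$ satisfies $\opn{res}_{k=q_i}\phi+\opn{res}_{k=p_i}\phi=\frac{|\gamma_i|^2}{p_i-q_i}\bigl[\prod_{j\ne i}\frac{q_i-q_j}{q_i-p_j}\bigr](d_i^2-1)\ge0$, strictly positive when $\gamma_i\ne0$ (the interlacing makes the bracketed product positive, and the phases enter only through $|\psi(p_i)|^2=|r_i|^2|\gamma_i|^2$); summing all residues then contradicts $\phi(k)=O(k^{-2})$. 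Your proof cannot be completed without replacing the positivity step by a phase-insensitive argument of this kind.
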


\begin{proof}
First we show that
\begin{equation} \label{eqn:P:det}
  P(z,\lambda) = \bigg[ \prod_{j=1}^M (p_j-q_j)\biggr]
  \bigg[\prod_{j=1}^M b_j\lambda^{\nu_j}\biggr]
  \bigg[\prod_{i<j}^{(\mc{M})} a_{ij}^2\biggr] \det R(z,\lambda),
\end{equation}
where the $M\times M$ matrix $R(z,\lambda)$ has entries
\begin{equation}
  R_{ij}(z,\lambda) = r_i(z,\lambda) \delta_{ij} + \frac{1}{p_i-q_j}, \qquad
  r_i(z,\lambda) =
  \frac{b_i^{-1}\lambda^{-\nu_i}z^{k_i}}{p_i-q_i}\prod_{j\ne i}^M
  a_{ij}^{-1}.
\end{equation}
The symbol $\prod_{j\ne i}^M$ indicates a product over $j\in\mc{M}$
omitting $j=i$, while $\prod_{i<j}^{(\mc{M})}$ is a product over all pairs
$(i,j)\in \mc{M}^2$ such that $i<j$.  By expanding $\det R=\sum_\sigma
\opn{sgn}(\sigma)R_{i,\sigma(i)}$ and collecting like products of the
$r_i(z,\lambda)$, we find that
\begin{equation}
  \det R = \sum_{S\in\mc{P(\mc{M})}} \bigg[\prod_{i\in S} r_i\biggr]
  \det R_{S'}, \qquad \big(R_{S'}\big)_{ij} = \frac{1}{p_{S'_i} - q_{S'_j}}.
\end{equation}
Here $\{S'_1,\dots,S'_n\}$ is an enumeration of $S'$, i.e.~$R_{S'}$ is
the $n\times n$ (Cauchy) matrix obtained by removing the rows and
columns with indices in $S$ from the Cauchy matrix
$\big\{(p_i-q_j)^{-1}\}_{i,j=1}^M$, and $\det R_\varnothing$ is taken
to be 1.  The determinant of a Cauchy matrix is well-known, giving
\begin{equation}
  \det R_{S'} = \frac{
    \prod_{i<j}^{(S')}
    (p_i - p_j)(q_j - q_i)}{
    \prod_{i,j\in S'}(p_i-q_j)} =
  \bigg[\prod_{j\in S'}\frac{1}{p_j - q_j}\biggr]
  \prod_{i<j}^{(S')} a_{ij}^{-2}.
\end{equation}
Thus, the right hand side of (\ref{eqn:P:det}) is equal to
\begin{equation}
  \sum_{S\in\mc{P}(\mc{M})}
  \bigg[ \prod_{j\in S'} b_j \lambda^{\nu_j}\biggr]
  \bigg[ \prod_{i\in S} z^{k_i}\biggr]
  \bigg[ \prod_{i\in S}\prod_{j\ne i}^M a_{ij}^{-1} \biggr]
  \bigg[ \prod_{i<j}^{(\mc{S'})} a_{ij}^{-2} \biggr]
  \bigg[ \prod_{i<j}^{(\mc{M})} a_{ij}^2 \biggr].
\end{equation}
If $i$ and $j$ are both in $S$ or both in $S'$, the terms $a_{ij}^2$
in the final product cancel with corresponding terms in one of the
previous two products.  If $i\in S$ and $j\in S'$, one of the factors
of $a_{ij}$ (or $a_{ji}$ if $i>j$) in the final product cancels with
$a_{ij}^{-1}$ in the middle product, leaving behind
$\prod_{(i,j)\in S\times S'} a_{ij}$, as required.

Next we show that $R(z,\lambda)$ is invertible for $|z|\ge1$ and
$|\lambda|=1$.  Fix such a $z$ and $\lambda$.  Define $d_i=
|b_i^{-1}\lambda^{-\nu_i}z^{k_i}|$ so that
\begin{equation}
  |r_i|^2 = \frac{d_i^2}{(p_i - q_i)^2}\prod_{j\ne i}^M
  \frac{(q_j-q_i)(p_j-p_i)}{(p_j - q_i)(q_j - p_i)}, \qquad
  d_i > 1.
\end{equation}
Suppose for the sake of contradiction that there is a non-zero
vector $\gamma\in\mathbb{C}^M$ such that $R\gamma=0$.  This means
that
\begin{equation*}
  \sum_{j=1}^M\left(r_i\delta_{ij} + \frac{1}{p_i-q_j}\right)\gamma_j =
  r_i\gamma_i + \psi(p_i) = 0, \quad (1\le i\le M), \qquad
  \psi(k) := \jd\sum_{j=1}^M \frac{\gamma_j}{k - q_j}.
\end{equation*}
Then we define
\begin{equation}
  \phi(k) = \left( \sum_i \frac{\gamma_i}{k-q_i} \right)
  \left( \sum_m \frac{\bar{\gamma}_m}{k-q_m} \right)
  \left( \prod_j \frac{k - q_j}{k - p_j} \right)
\end{equation}
and observe that
\begin{equation}
  \opn{res}_{k=p_i}\phi +
  \opn{res}_{k=q_i}\phi =
  \frac{|\gamma_i|^2}{p_i - q_i}\bigg[ \prod_{j\ne i}
    \frac{q_i - q_j}{q_i - p_j}\biggr]\big( d_i^2 - 1 ) \ge 0,
  \qquad (1\le i\le M)
\end{equation}
where the inequality is strict if $\gamma_i\ne0$ and
we used $|\psi(p_i)|^2 = |r_i|^2|\gamma_i|^2$.  Thus, the
sum of all the residues of $\phi(k)$ is strictly positive,
contradicting $\phi(k)=O(k^{-2})$ as $k\rightarrow\infty$.
\end{proof}

\section{Uniqueness of Periodic Traveling Waves}
\label{appendix:unique}

In this section we elaborate on the paper \cite{amick:toland:BO} showing
that the only traveling wave solutions of the Benjamin-Ono equation
are the solitary and periodic wave solutions found by Benjamin in
\cite{benjamin:67}.  The purpose of this section is to modify their
argument to prove that we have found all $2\pi$-periodic traveling
solutions, and to simplify part of their analysis.

Consider any non-constant, $2\pi$-periodic traveling solution of
(\ref{eqn:BO}).  After a transformation of the form
(\ref{eqn:add:const}), we may assume $u(x)$ is a stationary solution
satisfying
\begin{equation}\label{eqn:BO:stat}
  uu_x = Hu_{xx}, \qquad\quad  (x\in\mathbb{R}\big/2\pi\mathbb{Z}).
\end{equation}
After a translation, we may assume $u_x(0)=0$.  Integrating once, there
is a constant $c$ such that
\begin{equation}\label{eqn:BO:stat2}
  \frac{1}{2}u^2 = Hu_x + \frac{c^2}{2}, \qquad (c>0).
\end{equation}
The integration constant must be positive since $Hu_x$ is the
derivative of a periodic function while the left hand side is
positive.  Now define the holomorphic function
\begin{equation}
  f_1(z) = \frac{1}{\pi}\int_0^{2\pi} \frac{u(\theta)e^{i\theta}}
  {e^{i\theta}-z}\,d\theta, \qquad (z\in\Delta).
\end{equation}
A direct calculation using Fourier series shows that
\begin{equation}
  f_1(z^+) = \alpha_0 + u(\theta) + iHu(\theta), \qquad
  z=e^{i\theta}, \qquad \alpha_0 = \frac{1}{2\pi}\int_0^{2\pi}
  u(\theta)\,d\theta,
\end{equation}
where $f_1(z^+)$ is the limit of $f_1(\zeta)$ as $\zeta$ approaches
$z$ from the inside.  Now define
\begin{equation}
  f_2(z) = f_1(e^{iz}) - \alpha_0, \qquad \imag z \ge 0.
\end{equation}
Then $f_2(z)$ is analytic and bounded in the upper half-plane and
satisfies
\begin{equation}
  f_2(x) = u(x) + iHu(x), \qquad x\in\mathbb{R}.
\end{equation}
Next we extend $u(x)$ to the upper half-plane via
$u(x,y)=\real\{f_2(x+iy)\}$ and define
\begin{equation}
  U(x,y) = \frac{u(x/c,y/c)+c}{2c},
\end{equation}
where $c$ was determined by $u(x)$ in (\ref{eqn:BO:stat2}).  We then
have
\begin{equation*}
  u_y(x,0) = \real\{if_2'(x)\}=-Hu_x(x)=\frac{c^2-u(x)^2}{2}, \qquad
  U_y(x,0) = U(x,0) - U(x,0)^2.
\end{equation*}
Amick and Toland \cite{amick:toland:BO} showed that any non-constant,
bounded, harmonic function $U(x,y)$ defined in the upper half-plane and
satisfying the nonlinear Neumann boundary condition $U_y=U-U^2$
on the real axis as well as $U_x(0,0)=0$, is given by
\begin{equation}
  U(x,y) = \real\{f(x+iy)\}, \qquad (y\ge0),
\end{equation}
where $f(z)$ is the (unique) solution of the complex ordinary differential
equation
\begin{equation}\label{eqn:f:trav:B}
  \frac{df}{dz}(z) = \frac{i}{2}\big[f(z)^2 - a^2\big], \qquad
  f(0)=U(0,0), \qquad a = \sqrt{2U(0,0) - U(0,0)^2}
\end{equation}
over the upper half-plane.
Moreover, such a solution $U(x,y)$ will satisfy $U(0,0)\in(0,1)\cup(1,2]$,
and the case $U(0,0)=2$ corresponds to the solitary wave solution
$U(x,0)=2/(1+x^2)$, which is ruled out by the assumption that $u$ in
(\ref{eqn:BO:stat}) is periodic.  Rather than treat the cases
$U(0,0)\in(0,1)$ and $U(0,0)\in(1,2)$ separately as was done in
\cite{amick:toland:BO}, we choose the unique $\beta\in(-1,1)$ such
that
\begin{equation}
  U(0,0) = \frac{(1+\beta)^2}{1+\beta^2}, \qquad\quad
  a = \frac{1-\beta^2}{1+\beta^2}
\end{equation}
and check directly that
\begin{equation}
  f(z) = a\left[1 + \frac{2\beta}{e^{-iaz} - \beta}\right]
\end{equation}
satisfies (\ref{eqn:f:trav:B}).  Since $U(x,y) = \real\{f(x+iy)\}$
is $x$-periodic with (smallest) period $\frac{2\pi}{a}$ while
$u(x,y)=[2cU(cx,cy)-c]$ is $x$-periodic with period $2\pi$,
it must be the case that $c=N/a$ for some positive integer $N$.
But then
\begin{equation}
  u(x)=\real\{Nf_3(Nx)\}, \qquad
  f_3(z) = \frac{2}{a}f\left(\frac{z}{a}\right)-\frac{1}{a} =
  \frac{1-3\beta^2}{1-\beta^2} + \frac{4\beta e^{iz}}{1-\beta e^{iz}},
\end{equation}
i.e.~$u(x)$ is one of the $N$-hump stationary solutions discussed
in Section~\ref{sec:trav}.

\bibliographystyle{plain}
\bibliography{refs}

\end{document}